\numberwithin{equation}{section} % For numbering the equations with n°section.
\newtheorem{theorem}{Theorem}[section]
\newtheorem{proposition}[theorem]{Proposition}
\newtheorem{lemma}[theorem]{Lemma}
\newcommand\C{{\mathbb C}}
\newcommand\LL{{\mathbb L}}
\newcommand\R{{\mathbb R}}
\newcommand\Z{{\mathbb Z}}
\newcommand\ba{{\mathbf a}}
\newcommand\bb{{\mathbf b}}
\newcommand\bh{{\mathbf h}}
\newcommand\bk{{\mathbf k}}
\newcommand\bt{{\mathbf t}}
\newcommand{\bd}{\mathbf{d}}
\newcommand{\bK}{\mathbf{K}}
\newcommand\bnull{{\mathbf 0}}
\newcommand\sC{{\mathscr{C}}}
\newcommand\cE{{\mathcal E}}
\newcommand{\mce}{\mathcal{E}}
\newcommand\cF{{\mathcal F}}
\newcommand\rd{{\mathrm{d}}} % for \rd x in the integral
\newcommand\re{{\mathrm{e}}} % the exponential map
\newcommand\ri{{\mathrm{i}}} % the number \ri so that \ri^2 = -1.
\renewcommand\Re{\rm Re}
\newcommand\VTr{{\underline{\rm Tr}}}
\newcommand{\dee}{\:\mathrm{d}}
\newcommand{\dist}{{\rm dist}}
\DeclareMathOperator{\Tr}{Tr}
\DeclareMathOperator{\e}{e}
\newcommand{\ps}[2]{\langle #1,#2\rangle}
\title[Peierls Instability in hexagonal lattices]{Peierls Instability in hexagonal Lattices}
\author{David Gontier}
\address[David Gontier]{CEREMADE, University of Paris-Dauphine, PSL University, 75016 Paris, France.}
\email{gontier@ceremade.dauphine.fr}
\author{Thaddeus Roussigné}
\address[Thaddeus Roussigné]{CEREMADE, University of Paris-Dauphine, PSL University, 75016 Paris, France.}
\email{roussigne@ceremade.dauphine.fr}
\author{Éric Séré}
\address[Éric Séré]{CEREMADE, University of Paris-Dauphine, PSL University, 75016 Paris, France.}
\email{sere@ceremade.dauphine.fr}
\date{27th October 2025}
\begin{document}

\maketitle

\begin{abstract}
    We investigate a conventional tight-binding model for graphene, where distortion of the honeycomb lattice is allowed, but penalized by a quadratic energy. We prove that the optimal 3-periodic lattice configuration has Kekulé O-type symmetry, and that for a sufficiently small elasticity parameter, the minimizer is not translation-invariant. Conversely, we prove that for a large elasticity parameter the translation-invariant configuration is the unique minimizer.  
\end{abstract}

%%%%%%%%%%%%%%%%%%%%%%%%%
\section{Introduction}

When studying closed one-dimensional atomic chains, such as polyacetylene, one encounters a distortion by which the ground state electronic configuration \textit{dimerises}. Despite the energy functional being translation-invariant, its minimizers turn out to be 2-periodic, rather than 1-periodic with respect to the chain. This phenomenon was first encountered and explained in the 1950s by Peierls \cite{peierls1955quantum}, Fröhlich \cite{frohlich1954theory}, Longuet-Higgins and Salet \cite{longuet1959alternation}, and is commonly referred to as the Peierls instability or Peierls distortion~\cite{kennedy1987proof, lieb1995stability, gontier2023phase}. 

In this paper, we investigate an analogous distortion for two-dimensional honeycomb crystals such as graphene. In the thermodynamic limit, the possible distortions of graphene-like lattices for the Hubbard energy were investigated by Frank and Lieb in \cite{frank2011possible}. They showed that, although the model is $1$--periodic with respect to the lattice vectors, the ground state energy is attained for a configuration which is at most $3$--periodic, and furthermore has at most three different hopping parameters by way of the lattice's symmetries (see below for details). They also conjectured that all minimizers obey this structure. We build on these results in the Hückel (non-interacting) case with a quadratic elastic energy, and prove that among the set of 3-periodic and honeycomb-symmetric configurations, the minimizers have the additional \textit{Kekulé O-type} symmetry. This means that two of the three aforementioned hopping values are equal. We also prove that when the lattice rigidity is small, the energy inequality is strict and the configuration is truly distorted, and as the rigidity increases, the three parameters become equal and we recover pristine graphene. 

In the strict Kekulé distortion case, we can roughly interpret the bonds corresponding to the two equal hopping parameters as longer single covalent bonds, and the remaining hopping value as a shorter double bond, recovering the carbon atom's valence of four.  The overall Kekulé texture is reminiscent of the celebrated resonance forms for benzene, first posited by Kekulé in the 1860s \cite{kekule1865constitution}, hence the "O-type" designation, itself initially applied to graphene in 2000 by Chamon in \cite{chamon2000solitons}. Such a distorted material is an insulator (a well-known fact, that we prove below in Lemma~\ref{lem:spectrum_T}). But when all three parameters are equal, the corresponding model is metallic. Our result therefore shows that there is phase transition from an insulating Kekulé phase to the metallic phase (pristine graphene) as the lattice rigidity increases.

%%%%%%%%%%%%%%%
\subsection{Peierls tight-binding energy on the hexagonal lattice}

We consider a tight-binding Hückel model (sometimes called Peierls model or Su--Schrieffer--Heeger model~\cite{su1979solitons}) for the entire graphene plane, which we recall has a hexagonal structure with a 2--atom unit cell. In this model, graphene is parametrised by the real-valued hopping parameter vector $\bt = \{ t_{xy}\}$ and the distance vector $\bd = \{ d_{xy}\}$ over pairs of atoms $x,y$. We now assume that $\bt$ and $\bd$ are $L$--periodic with respect to the $2$--atom unit cell for some integer $L$, meaning it suffices to know the values they take within an $L \times L$ supercell of unit cells to describe the whole lattice. Then the {\em energy per Carbon atom} writes as follows:
\[
    \cE(\bt, \bd) :=  - \VTr \left| T(  \bt ) \right| + \frac{\mu}{2} \left( \frac{1}{2 L^2} \right) \sum_{\ps{x}{y}} (d_{xy} - d_\sharp)^2.
\]
The first term describes the quantum energy of graphene's $\pi$-electrons under the tight-binding Hamiltonian $T ( \bt)$, which has coefficients $t_{xy}$ if the Carbon atoms labelled $x$ and $y$ are nearest neighbours for the hexagonal lattice, and $0$ otherwise. We denote by $\VTr$ the trace per unit Carbon atom, which is well-defined for operators $T(\bt)$ that commute with $L$--translations, as is the case under our previous assumption. We recall its definition below in Section~\ref{section:first_facts}.
%This particular expression with the absolute value comes from the minimization of $2 \Tr_\Lambda (T(\widetilde{\bt}) \gamma)$ over all one-body density matrices $\gamma$ satisfying the Pauli principle $0 \le \gamma = \gamma^* \le 1$. The $2$ factor stands for the spin of the electrons.
%
The second term describes a classical elastic interaction between Carbon nuclei, modelled by a quadratic distortion energy. The sum runs only over nearest neighbours, $d_{xy}$ is the distance between Carbon $x$ and $y$, and $d_\sharp$ is the resting distance. The $\frac{1}{2L^2}$ factor comes from the fact that there are $2 L^2$ Carbon atoms in a given $L \times L$ supercell. Following \cite{kennedy1987proof}, assume a linear relation between $t_{xy}$ and $d_{xy}$, of the form
\[
    t_{xy} - t_\sharp = - \alpha (d_{xy} - d_\sharp),
\]
for some $\alpha > 0$, and after the change of variables $\widetilde{\bt} = \bt/t_\sharp$, $\widetilde{\mu} = \mu t_\sharp^2/\alpha^2$, and $\widetilde{\cE} = \cE/t_\sharp$, we obtain the equivalent energy (dropping the tilde notations)
\begin{equation*}
     \cE(\bt) := - \VTr \left| T( \bt ) \right| + \frac{\mu }{2} \left( \frac{1}{2L^2} \right) \sum_{\ps{x}{y}} (t_{xy} - 1)^2 ,
\end{equation*}
with a single parameter left in the model, namely $\mu > 0$, which we call the {\em lattice rigidity}, or elasticity parameter. This model is the $2$--dimensional analogue of the usual Hückel model for polyacetylene. It describes a competition between the distortion energy of the molecule, and the quantum energy of the $\pi$--electrons. 

\medskip

The study of the energy $\cE$ turns out to be quite difficult. Actually, should we consider minimizers among periodic $\bt$, with different periods $L_1$ and $L_2$ along either lattice direction, the situation may be very different. 

Nonetheless, in~\cite{frank2011possible}, using a reflection positivity argument (see also \cite{lieb1995stability}), Frank and Lieb proved that, in the thermodynamic limit $L_1, L_2 \to \infty$, the infinite graphene sheet has a minimizer which is at most $3$-periodic, and with at most $3$ different values for the hopping parameters, arranged according to the lattice's axial symmetries. 

To be more specific, let $\ba_1, \ba_2 := (\sqrt{3}/2, \pm1/2)$ and $\LL_2 := \ba_1 \Z \oplus \ba_2 \Z$ be the usual triangular lattice. The $2$ subscript refers to the fact that graphene has $2$ Carbon atoms per unit cell of this lattice. We denote this unit cell by $\Gamma_2 := \R^2 / \LL_2$. We now set $\bb_1 := 2 \ba_1 - \ba_2$ and $\bb_2 := 2 \ba_2 - \ba_1$, and $\LL_6 := \bb_1 \Z + \bb_2 \Z$ a triangular sublattice of $\LL_2$, so that graphene has $6$ Carbon atoms per unit cells of $\LL_6$, which we denote by $\Gamma_6 := \R^2 / \LL_6$. See also Figure~\ref{fig:honeycomb1}. It is proved in~\cite{frank2011possible} that there is a minimizer $\bt$ of infinite graphene which is $\LL_6$--periodic. In addition, this configuration only has {\em three} different hopping amplitudes, which we denote by $t,u,v$, as displayed in Figure~\ref{fig:honeycomb1}, satisfying all axial symmetries with respect to axes going through covalent bonds. It is also conjectured in \cite{frank2011possible} that in fact, all minimizers have this 3-periodic symmetric structure.

\medskip

\begin{figure}[!h]
    \centering
    \includegraphics[width = 0.35\linewidth]{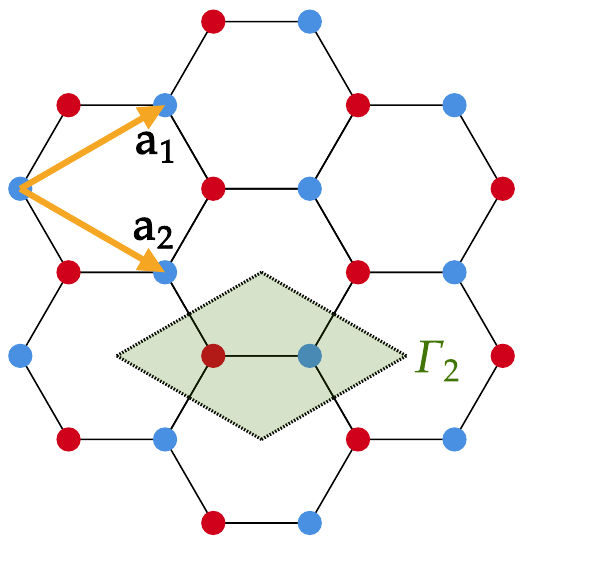}
    \includegraphics[width = 0.35\linewidth]{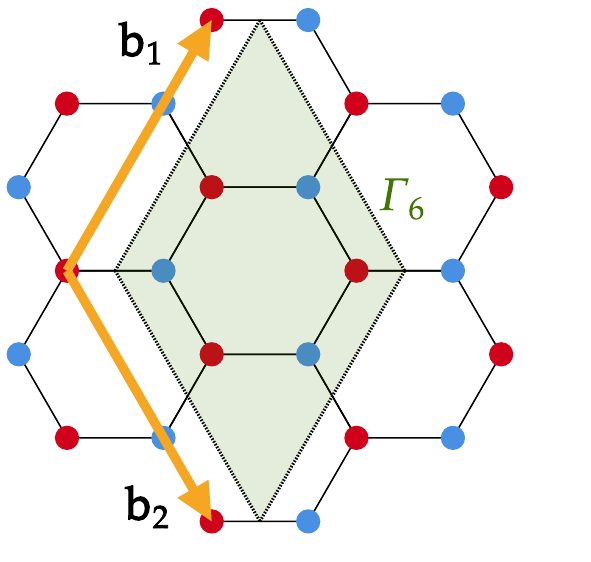} 
    \includegraphics[width = 0.5\linewidth]{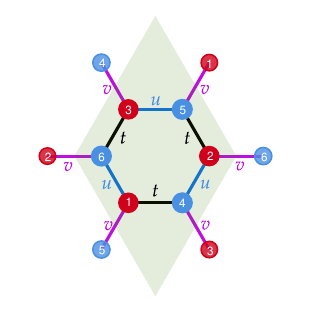}
    \caption{(Above, left) The standard honeycomb lattice, with unit vectors $\ba_1, \ba_2$ and the unit cell $\Gamma_2$. The blue and red atoms make up its two triangular sublattices. (Above, right) The larger unit cell $\Gamma_6$ defined with the new vectors $\bb_1, \bb_2$, containing six atoms and nine bonds. (Below) A minimizing configuration following \cite{frank2011possible}, with our conventions for the atom and bond labels.}
    \label{fig:honeycomb1}
\end{figure}

\medskip

In the sequel we take this as a framework and investigate minimizers only among 3-periodic configurations, and having the symmetries described in~\cite{frank2011possible}. For such infinite configurations, the energy per unit Carbon atom is well defined, depends solely on the three remaining hopping parameters $t,u,v$, and is given by
\[
    \boxed{\mce(t,u,v) = - \VTr \left| T(t,u,v) \right| + \frac{\mu}{2}\cdot \frac12 \left((t-1)^2 + (u-1)^2 + (v-1)^2\right).}
\]
We used here that the unit cell $\Gamma_6$ contains $6$ carbon atoms and $3$ bonds of each of the $t,u,v$ amplitudes, thus each bond has a $1/2$ coefficient. Note that $\mce$ is invariant under permutations of $t,u,v$, since they play symmetric roles in the distortion (alternatively, a translation by the $\ba_1$--vector acts as the permutation $(t,u,v) \to (v, t, u)$). The precise form of the hopping operator $T$ follows from our labelling, see Figure~\ref{fig:honeycomb1} and Section~\ref{section:first_facts}.

\medskip

\subsection{Main results}
Our goal is to study the minimizers of $\mce$. If the minimizer is such that two of the parameters are equal, we say that this minimizer has {\em Kekulé $O$--symmetry}. If in addition the three parameters are equal, we recover pristine configurations. We can now state our main results.
\begin{theorem}[Kekulé symmetry] \label{thm:Ksym_short}
    For any rigidity $\mu > 0$, the minimizers of $\mce$ are non-negative and satisfy the Kekulé $O$-type symmetry.
\end{theorem}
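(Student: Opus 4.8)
The plan is to argue in two stages: first reduce to non‑negative hoppings by a gauge (sign) symmetry of $T$, and then deduce the Kekulé structure by combining the full permutation symmetry of $\mce$ with the precise form of the elastic term.

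\emph{Step 1: non‑negativity.} The operators $T(t,u,v)$ and $T(\epsilon_t t,\epsilon_u u,\epsilon_v v)$ have the same spectrum for any signs $\epsilon_t,\epsilon_u,\epsilon_v\in\{\pm1\}$: on the bipartite honeycomb lattice a change of sign of one bond family is implemented by a gauge transformation, i.e.\ conjugation of $T$ by a diagonal unitary with entries in $\{\pm1\}$ (possibly composed with a lattice translation). Hence $\VTr|T(t,u,v)|=\VTr|T(|t|,|u|,|v|)|$. On the other hand, if for instance $t<0$ then $(|t|-1)^2=(t+1)^2<(t-1)^2$, so replacing $(t,u,v)$ by $(|t|,|u|,|v|)$ strictly lowers the elastic energy while leaving the electronic energy unchanged. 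Therefore every minimizer is non‑negative, and it remains to prove the Kekulé symmetry among non‑negative triples.

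\emph{Step 2: reduction to a one–parameter problem.} A translation by $\ba_1$ realizes the cyclic permutation $(t,u,v)\mapsto(v,t,u)$ and the axial symmetries of the honeycomb realize the transpositions, so $\VTr|T(t,u,v)|$ is a symmetric function of $(t,u,v)$; write it as $\Phi(e_1,e_2,e_3)$ with $e_1=t+u+v$, $e_2=tu+uv+vt$, $e_3=tuv$. Since
\[
 (t-1)^2+(u-1)^2+(v-1)^2=(e_1^2-2e_2)-2e_1+3 ,
\]
the elastic term depends only on $(e_1,e_2)$. Fix $(e_1,e_2)=(p,\sigma)$. The admissible values of $e_3$ then fill an interval $[e_3^-,e_3^+]$, namely those for which $X^3-pX^2+\sigma X-e_3$ has three real roots, and its endpoints are exactly the values of $e_3$ at which two of $t,u,v$ coincide, i.e.\ the Kekulé configurations. (Equivalently, with $q:=t+u\omega+v\omega^2$, $\omega=\re^{2\pi\ri/3}$, one has $|q|^2=e_1^2-3e_2$, $e_3$ is an affine function of $\Re(q^3)\in[-|q|^3,|q|^3]$, and the Kekulé locus is $\{\Im(q^3)=0\}$.) Restricting $\mce$ to the circle $\{e_1=p,\ e_2=\sigma\}$ the elastic term is constant, so minimizing $\mce$ there is the same as maximizing $e_3\mapsto\Phi(p,\sigma,e_3)$ over $[e_3^-,e_3^+]$. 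Thus the theorem reduces to the following claim: \emph{for fixed $(e_1,e_2)$, the map $e_3\mapsto\VTr|T(t,u,v)|$ has no interior local maximum on its interval of definition}; granting this, a global minimizer, being in particular a minimizer of $\mce$ on the circle through it, must sit at an endpoint of the corresponding $e_3$–interval, hence be Kekulé.

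\emph{Step 3: the key estimate — the main obstacle.} This last claim is where the work lies. I would use the Bloch representation: the honeycomb being bipartite with three sites per sublattice in $\Gamma_6$, $T(t,u,v)$ is unitarily equivalent to a direct integral over $k\in\Gamma_6^\ast$ of $6\times6$ off‑diagonal Hermitian matrices built from a $3\times3$ block $\widehat Q(k)$, and
\[
 \VTr|T(t,u,v)|=\frac{1}{3\,|\Gamma_6^\ast|}\int_{\Gamma_6^\ast}\Tr\sqrt{\widehat Q(k)^\ast\widehat Q(k)}\ \rd k ,
\]
where $\widehat Q(k)=t\,U_A(k)+u\,U_B(k)+v\,U_C(k)$, the $U_A,U_B,U_C$ being $3\times3$ permutation matrices dressed by Bloch phases (one per bond family), arranged in a Latin‑square pattern. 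One checks that $\Tr\bigl(\widehat Q(k)^\ast\widehat Q(k)\bigr)=3(e_1^2-2e_2)$ is independent of $k$, so the singular values of $\widehat Q(k)$ are governed by the two remaining coefficients of its secular equation: $c_2(k)$, the sum of principal $2\times2$ minors, and $c_3(k)=|\det\widehat Q(k)|^2$, the latter being a quadratic form in the monomials $t^3,u^3,v^3,tuv$. It then remains to show that, along the curve $\{e_1=p,\ e_2=\sigma\}$, the integrand $e_3\mapsto\Tr\sqrt{\widehat Q(k)^\ast\widehat Q(k)}$ has no interior local maximum for a.e.\ $k$ — for instance that it is convex in $e_3$ — and to integrate. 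The difficulty is that $e_3\mapsto\widehat Q(k)^\ast\widehat Q(k)$ is not affine (the constraint forces $(t,u,v)$ onto a curve), so that the concavity of $M\mapsto\Tr\sqrt M$ on positive matrices works against us and must be overcome by the curvature of that curve; I expect this to require an explicit manipulation of the $3\times3$ characteristic polynomial, and possibly a further reduction of the $k$–integral using the symmetries of $\widehat Q$. Once this estimate is in place, the combination with Step 1 yields Theorem~\ref{thm:Ksym_short}.
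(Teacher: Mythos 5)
Your Step 1 contains a genuine error. Flipping the sign of a single bond family is \emph{not} a gauge symmetry of this model: conjugation by a diagonal unitary with entries $\pm1$ preserves the product of the hopping amplitudes around every closed cycle, and composing with a lattice translation (or any lattice symmetry) merely permutes the hexagons. Each hexagon of the Kekulé pattern carries exactly three bonds of one family and three of another (this is precisely why $(S_1S_2)^3=I$ in the proof of Lemma~\ref{lem:spectrum_T}), so for $t,u,v>0$ the configuration $(-t,u,v)$ has hexagon fluxes $\{-t^3u^3,\,-t^3v^3,\,u^3v^3\}$ while $(t,u,v)$ has all three positive; hence no diagonal-sign conjugation, even combined with translations, relates them. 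Concretely, at $\bk=\bnull$ the fiber $A(\bnull)$ is circulant with singular values $|t+u+v|$, $|t+u\omega+v\omega^2|$, $|t+u\omega^2+v\omega|$ ($\omega=\re^{2\pi\ri/3}$): for $(1,1,1)$ these are $(3,0,0)$, for $(-1,1,1)$ they are $(1,2,2)$. So $\VTr|T(t,u,v)|\neq\VTr|T(|t|,|u|,|v|)|$ in general — only the global flip $(t,u,v)\mapsto(-t,-u,-v)$ is a (sublattice) gauge symmetry — and your non-negativity argument collapses. In the paper, non-negativity is not obtained from a sign symmetry at all; it comes from the explicit non-negative replacement triples $(\tilde t,\tilde t,\tilde v)$ of Lemma~\ref{thm:Ksym_long}.

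Your Step 2 is sound and is essentially the paper's strategy: fixing $(e_1,e_2)$ is the paper's circle $\sC$, and the fact that the endpoints of the admissible $e_3$-interval are exactly the Kekulé points is Lemma~\ref{lem:maxtuv}. But Step 3, which you yourself flag as the main obstacle, is left unproven, and it is not available in the uniform pointwise form you hope for. The paper proves fiberwise monotonicity of $g(\bk)=\Tr\sqrt{A^*A}$ in $W=e_3$ via the implicit quartic~\eqref{eq:g_inter}: one needs the slope and intercept of the line $x\mapsto\beta(W)+2\sqrt{\gamma(W)}\,x$ to be nondecreasing in $W$, which requires $E\ge0$ (so that $\beta_1=6E\,\Re(Z)\ge0$) \emph{and} the lower bound $W\ge-\gamma_1/(2\gamma_2)$ of Lemma~\ref{lem:bound_W_second_case}, which is established only under $E>s/\sqrt2$. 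On circles with $E\le s/\sqrt2$ (in particular $E<0$) no such statement is proved and the fixed-$(e_1,e_2)$ reduction is abandoned: the paper instead uses the strict concavity argument of Lemma~\ref{lem:sphere_bound} to replace $(t,u,v)$ by $(S,0,0)$ — a move that changes $e_1$ and $e_2$, so the elastic term must be compared separately, and it decreases exactly when $E\le s/\sqrt2$. Thus your outline is missing both the proof of the key estimate and the case distinction that makes it provable; as written, the proposal establishes Theorem~\ref{thm:Ksym_short} only modulo an unverified (and, on part of the parameter range, doubtful) claim, in addition to the incorrect Step 1.
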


In other words, in the non-interacting Hubbard case with quadratic distortion energy, two of the three amplitudes prescribed in \cite{frank2011possible} are equal. The proof relies entirely on the quadratic expression of the distortion energy. We do not know how to generalise the result to more general elastic terms, nor to the interacting Hubbard case. 
 
\begin{theorem}[Phase transition]\label{thm:phasetrans}
    There are $0 < \mu_c \le \mu_c' < \infty$ such that: 
    \begin{enumerate}
        \item[(i)]
        For all $0 < \mu < \mu_c$, the minimizers of $\mce$ satisfy $t = u < v$ (up to permutation). In particular, minimizers of $\mce$ are strictly Kekulé distorted, and the corresponding tight-binding Hamiltonian $T$ is gapped, satisfying $0 \notin \sigma(T)$ (insulating case).
        \item[(ii)]
        For all $\mu \ge \mu_c'$, the minimizer of $\mce$ is unique, and of the form $t = u = v$. In this case, graphene is pristine, and $0 \in \sigma(T)$ (conducting case).
    \end{enumerate}
\end{theorem}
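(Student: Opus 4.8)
The plan is to combine the Kekulé reduction of Theorem~\ref{thm:Ksym_short} with the scale invariance of the spectral term, thereby turning the problem into the maximization of an explicit function on a compact arc, and then to feed in one Cauchy--Schwarz inequality. Write $g(t,u,v):=\VTr|T(t,u,v)|$, which is continuous, convex, permutation-symmetric and positively $1$-homogeneous. By Theorem~\ref{thm:Ksym_short} every minimizer of $\mce$ is, up to permutation, of the form $(a,a,b)$ with $a,b\ge0$; since $\mce(0,0,0)>0>\mce(1,1,1)$ we may also assume $(a,b)\neq(0,0)$. Writing $(a,b)=\lambda(\alpha,\beta)$ with $\lambda>0$ and $(\alpha,\beta)$ on the arc $S:=\{2\alpha^2+\beta^2=1,\ \alpha,\beta\ge0\}$, homogeneity gives $g(\lambda\alpha,\lambda\alpha,\lambda\beta)=\lambda\,g(\alpha,\alpha,\beta)$, so $\lambda\mapsto\mce(\lambda\alpha,\lambda\alpha,\lambda\beta)$ is an upward parabola whose minimum over $\lambda>0$ is
\[
 \min_{\lambda>0}\mce(\lambda\alpha,\lambda\alpha,\lambda\beta)=\frac{3\mu}{4}-\frac{1}{4\mu}\,H_\mu(\alpha,\beta)^2,\qquad H_\mu(\alpha,\beta):=2\,g(\alpha,\alpha,\beta)+\mu(2\alpha+\beta),
\]
attained at $\lambda^\ast(\alpha,\beta)=H_\mu(\alpha,\beta)/\mu>0$. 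Hence minimizing $\mce$ is equivalent to maximizing $H_\mu$ over $S$, and the minimizer satisfies $t=u<v$ (resp.\ $t=u=v$) exactly when the corresponding maximizer of $H_\mu$ has $\alpha<\beta$ (resp.\ $\alpha=\beta$).

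The key inequality is the following. For each Bloch momentum $k$, $T(t,u,v)(k)$ is the off-diagonal doubling of a $3\times3$ matrix $B(k)$ whose entries are $t,u,v$ with phases, so $\VTr|T|=\tfrac13\,\big\langle\sigma_1(B(k))+\sigma_2(B(k))+\sigma_3(B(k))\big\rangle_k$. Applying Cauchy--Schwarz pointwise among the three singular values and then in $k$, together with $\big\langle\Tr(B(k)^\ast B(k))\big\rangle_k=3(t^2+u^2+v^2)$, gives $\VTr|T(t,u,v)|\le\sqrt{t^2+u^2+v^2}$, with equality iff $B(k)^\ast B(k)$ is a $k$-independent multiple of the identity; inspecting the structure of $B(k)$ shows this occurs precisely when at least two of $t,u,v$ vanish. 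Consequently $g(\alpha,\alpha,\beta)\le1$ on $S$ with equality only at $(\alpha,\beta)=(0,1)$ (the dimerized configuration), and $g_0:=g(1,1,1)<\sqrt3$. For \emph{(i)}: as $\mu\to0^+$, $H_\mu\to2g$ uniformly on the compact $S$, and $g$ has its unique maximum on $S$ at $(0,1)$; hence every maximizer $(\alpha_\mu,\beta_\mu)$ of $H_\mu$ converges to $(0,1)$, so there is $\mu_c>0$ with $\alpha_\mu<\beta_\mu$ for all $0<\mu<\mu_c$. Thus the minimizer satisfies $0\le t=u<v$ up to permutation, hence is strictly Kekulé distorted; a short computation (cutting all bonds of one colour block-diagonalizes $T$, which makes the one-sided derivatives of $\mce$ at $a=0$ or $b=0$ in the missing direction equal to $-\mu$, resp.\ $\le-\mu/2$) shows in fact $0<a<b$. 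That $0\notin\sigma(T)$ for such configurations is Lemma~\ref{lem:spectrum_T}.

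For \emph{(ii)}, let $p_\star:=(1/\sqrt3,1/\sqrt3)\in S$, corresponding to $t=u=v$. The linear functional $(\alpha,\beta)\mapsto2\alpha+\beta$ attains its maximum $\sqrt3$ on $S$ only at $p_\star$, nondegenerately, so $2\alpha+\beta\le\sqrt3-c\,\dist((\alpha,\beta),p_\star)^2$ for some $c>0$. Since $g\le1$ and $g_0<\sqrt3$, any maximizer $(\alpha_\mu,\beta_\mu)$ of $H_\mu$ obeys $2g_0/\sqrt3+\mu\sqrt3\le H_\mu(\alpha_\mu,\beta_\mu)\le2+\mu(2\alpha_\mu+\beta_\mu)$, whence $\sqrt3-(2\alpha_\mu+\beta_\mu)\le\tfrac2\mu(1-g_0/\sqrt3)$ and $\dist((\alpha_\mu,\beta_\mu),p_\star)=O(\mu^{-1/2})$. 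On a fixed neighbourhood of $p_\star$, using that $\VTr|T|$ is $C^2$ at the pristine point — its only non-smoothness is along the gapless line $t=u=v$, where it equals a smooth function plus a $|\dist|^3$ term coming from the folded Dirac cones — and that $p_\star$ is a critical point of $g|_S$, one gets $g(\alpha,\alpha,\beta)\le g_0/\sqrt3+C\,\dist((\alpha,\beta),p_\star)^2$. Hence $H_\mu(\alpha,\beta)-H_\mu(p_\star)\le(2C-c\mu)\dist(\cdot,p_\star)^2<0$ near $p_\star$ once $\mu$ is large, which together with the concentration estimate forces $p_\star$ to be the unique maximizer of $H_\mu$ for all $\mu\ge\mu_c'$, some $\mu_c'<\infty$. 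As $p_\star$ has $\alpha=\beta$, the minimizer of $\mce$ is then unique and equals $(c,c,c)$ with $c=\lambda^\ast(p_\star)/\sqrt3$, and $0\in\sigma(T(c,c,c))=c\,\sigma(T(1,1,1))$ by Lemma~\ref{lem:spectrum_T}. Since the two regimes are mutually exclusive, shrinking $\mu_c$ to $\min(\mu_c,\mu_c')$ if necessary yields $0<\mu_c\le\mu_c'<\infty$.

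The scaling reduction and the Cauchy--Schwarz bound make part~(i) quite short; I expect the two genuinely delicate points to be, first, the equality case of the bound — one must show it is saturated only at the degenerate dimer configuration and nowhere else, since this is exactly what selects $t=u<v$ rather than $t=u>v$ — and second, in part~(ii), the regularity of $\VTr|T|$ at the pristine point: despite the vanishing gap one needs that the map is $C^2$ there with nonnegative curvature transverse to the gapless line, so that the convex $O(\mu)$ elastic term eventually dominates. Both are consequences of the explicit spectral description of $T$ supplied by Lemma~\ref{lem:spectrum_T}.
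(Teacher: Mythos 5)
Your reduction (scale out $\lambda$ along rays, reduce to maximizing $H_\mu=2g+\mu(2\alpha+\beta)$ on the compact arc $S$) is a genuinely different route from the paper's, and your part (i) is essentially sound: the bound $\VTr|T(t,u,v)|\le\sqrt{t^2+u^2+v^2}$ with equality exactly when two hoppings vanish is the paper's Lemma~\ref{lem:sphere_bound} (your Cauchy--Schwarz proof works, provided you actually check the equality case, e.g.\ by noting that the off-diagonal entries of $A^*A$ are $tv\,\re^{-\ri\bk\cdot\bb_1}+tu+uv\,\re^{\ri\bk\cdot\bb_2}$ and cannot vanish for a.e.\ $\bk$ unless $tu=tv=uv=0$), and the compactness argument then forces $\alpha_\mu<\beta_\mu$ for small $\mu$. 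Note that this leans on Theorem~\ref{thm:Ksym_short}/Lemma~\ref{thm:Ksym_long} for the reduction to $(a,a,b)$ (legitimate, since it is proved independently), and it produces a non-explicit $\mu_c$, whereas the paper's Hessian computation culminating in \eqref{eq:energy_taylor} yields $\mu_c\approx0.888$.

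The genuine gap is in part (ii), at the step $g(\alpha,\alpha,\beta)\le g_0/\sqrt3+C\,\dist((\alpha,\beta),p_\star)^2$ near the pristine point. You justify it by asserting that $\VTr|T|$ is $C^2$ at the gapless configuration, ``smooth plus a $|\dist|^3$ term,'' and attribute this to Lemma~\ref{lem:spectrum_T}; but that lemma only gives the band edges $\tfrac{3}{\sqrt2}s$ and $3E$ at $\bk=0$ and says nothing about differentiability of the $\bk$-integrated trace as a function of $(t,u,v)$ at the point where the gap closes. Since $g$ is \emph{convex}, no soft argument supplies an upper quadratic bound: one must prove both that the tangential derivative of $g|_S$ vanishes at $p_\star$ and that the second-order coefficient is finite, which is exactly the content of the paper's perturbation analysis (contour/resolvent expansion, integrability of $c(\bk)$ thanks to the $1/|\bk|$ Dirac-cone singularity being integrable in 2D, and dominated convergence for the remainder $R(\eta,\bk)$). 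As written, (ii) therefore rests on an unproved regularity claim; it is fixable by importing that analysis, but it is not a consequence of the cited lemma. Two further remarks: your (ii) only gives uniqueness within the three-parameter Kekulé family (sufficient for the theorem as stated, since every minimizer is Kekulé by Theorem~\ref{thm:Ksym_short}, but weaker than the paper's Section~\ref{sec:trans_invar}, whose convexity-plus-Kagome argument needs no Kekulé reduction, covers all $L$-periodic configurations, and gives the explicit bound $\mu_c'\le 3\VTr\frac1{|H|}-\VTr|H|\approx1.114$); and the aside about one-sided derivatives at $a=0$ or $b=0$ is not needed for the statement, so its sketchiness is harmless.
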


In the first case where $t = u < v$, the lattice is distorted as represented in Figure~\ref{fig:kekfig}. It features a periodic spread of hexagons, either regular with long $t$ bonds, or alternating long $t$ and short $v$ bonds.

Numerically, we find that $\mu_c \geq 0.888$ and $\mu_c' \le 1.114$, and we conjecture that $\mu_c = \mu_c' \approx 0.888$. We therefore expect that there is a true transition threshold, separating the Kekulé O-type and pristine phases. Experimentally, isolated graphene always exhibits ``metallic'' behaviour, including at very low temperatures \cite{Andrade_2025}, and accordingly it seems that the appropriate value of $\mu$ is above the transition value of our model. Nonetheless, the Kekulé phase remains physically relevant: experiments conducted in \cite{Eom2020direct} and simulations in \cite{otsuka2024kekule} show that pristine graphene can be driven into a Kek-O distorted configuration by application of strain, exhibiting exactly the phase transition obtained here. Furthermore the nature of this transition is under study as detailed in \cite{li2017fermion}. It can also appear as an effect of growing graphene on a substrate, for instance by being intercalated with lithium atoms \cite{Bao2021evidence}. See \cite{Andrade_2025} for a review of current topics in Kekulé-distorted graphene.  

\medskip

To prove the first point of Theorem~\ref{thm:phasetrans}, we compute the linear perturbation around the optimal pristine configuration, which prevents us from obtaining uniqueness of the distorted minimizers up to permutation. We prove the second point using a convexity argument, which actually applies to periodic configurations of any period $L$, and not just $L=3$ (inherited from the reasoning in \cite{frank2011possible}).  The second point can also be generalised to a slightly larger class of distortion functionals, see Appendix~\ref{appendix:gen_elastic}. 

\medskip

We conclude this section with a comparison with the aforementioned Peierls instability for one-dimensional chains such as polyacetylene. Consider a finite Carbon chain on the periodised integer set $[\![ 1, \cdots, 2N ]\!]$ for even $N$. Then the minimizers to the finite-chain Peierls energy must be of the form $t_i = E + (-1)^i\delta$ with $\delta > 0$, as shown in \cite{kennedy1987proof} and in \cite{lieb1995stability}. The chain is said to be \textit{dimerised}, spontaneously breaking translation invariance. This distortion persists in the thermodynamic limit (see for instance \cite{gontier2023phase} for a discussion including positive temperatures), and it remains present for any rigidity $\mu$. The Kekulé distortion discussed here is in some sense an analogue to dimerisation in two dimensions, since it is periodic with respect to an enlarged unit cell, and in both cases the kinetic energy is lowered by opening a gap in the band diagram. The absence of a phase transition in dimension 1 is a dimensional effect, by which the quantum energy term is always dominant over the distortion term when considering a perturbation around the pristine configuration.

%%%%%%%%%%%%%%%%%%%%%%%%%
\section{First facts on the Kekulé tight--binding operator}\label{section:first_facts}

Let us first spell out the operator $T(t,u,v)$, and explain our definition for the trace per Carbon atom.

\begin{figure}[ht]
    \centering
    \includegraphics[scale=0.8]{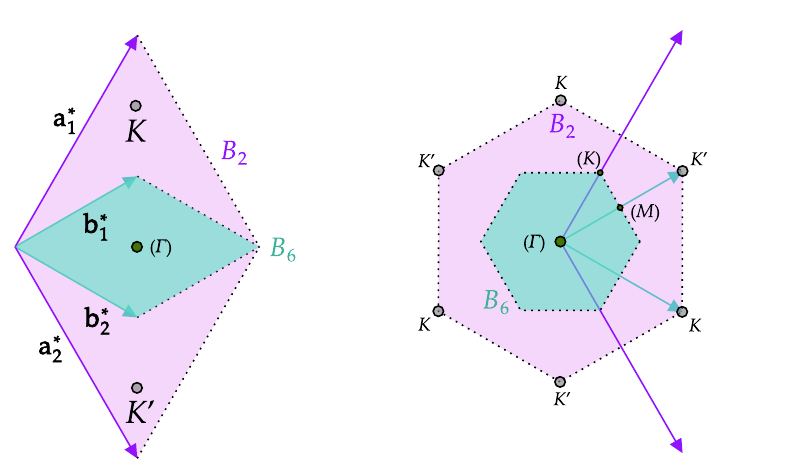}
    \\
    \caption{The Brillouin zones $B_2$ and $B_6$ are the unit cells of the reciprocal lattices $\LL_2^*$ and $\LL_6^*$, represented in rhombic (left) and Voronoi or hexagonal forms (right). Note that $B_6$ is exactly one-third of $B_2$, and the two Dirac points $\bK, \bK' \in B_2$ are ``folded'' onto central point $(\Gamma)$ when enlarging unit cell $\Gamma_2$ to $\Gamma_6$. The high-symmetry points $(K)$ and $(M)$ are used for the band diagram in Figure~\ref{fig:bdiagram} below.}
    \label{fig:brillouin}
\end{figure}

%%%%%%%%%%%%%%%%%
\subsection{The trace per Carbon atom}

Consider a lattice $\LL$, an $\LL$-periodic discrete set $\Lambda$ and an operator $H$ acting on $\Lambda$ which commutes with $\LL$--translations. We denote by $N$ the number of points of $\Lambda$ per unit cell of $\LL$. For instance, if $\Lambda$ is the honeycomb, and $\LL_2$ is the usual triangular lattice, then $N = 2$, and if $\LL_6$ is the triangular sub-lattice of the previous section, then $N = 6$.

\medskip

By periodicity, the operator $H$ acts as a multiplicative operator over $\C^N$ in Fourier space. This means that for all $f \in \ell^2(\Lambda) \approx \ell^2(\LL, \C^N)$, we have, for all $\bk \in B := \R^2 / \LL$,
\[
    \cF_\LL \left[ Hf \right](\bk) = H(\bk) \cF_\LL\left[ f\right](\bk),
\]
where $\cF_\LL$ is the Fourier transform relative to the $\LL$--lattice. Note that for all $\bk$, $H(\bk)$ is an $N \times N$ matrix. Then, we define the {\em trace per Carbon atom} of $H$ by
\[
    \VTr(H) := \frac{1}{N} \fint_{B} \Tr_{\C^N} \left[ H(\bk) \right] \rd \bk,
\]
with $\fint_B := \frac{1}{| B |} \int$. This definition is independent of the choice of the lattice $\LL$, in the following sense: if $\LL'$ is a sublattice of $\LL$, then $H$ also commutes with $\LL'$--translations. In this case, $N'$ is a multiple of $N$, $B'$ is a fraction of $B$, and so on. However the normalisation is chosen so that $\VTr'(H) = \VTr(H)$. Note for instance that for $H = I$ the identity operator, we have $\VTr(H) = 1$.

%%%%%%%%%%%%%%%%%

\subsection{The tight--binding operator}

Consider the operator $T := T(t,u,v)$, acting on the full hexagonal lattice, which commutes with $\LL_6$--translations, and with hopping parameters given as in Figure~\ref{fig:honeycomb1}. By periodicity, and since $\Gamma_6$ contains $6$ Carbon atoms, $T(t,u,v)$ acts as multiplication operator over $\C^6$ in Fourier space (with respect to the $\LL_6$--lattice). More specifically, consider the dual lattice $\LL_6^* := \bb_1^* \Z \oplus \bb_2^* \Z$ with $\bb_i \cdot \bb_j^*  = 2 \pi \delta_{ij}$, and the Brillouin zone $B_6 := \R^2 / \LL_6^*$ (see Figure~\ref{fig:brillouin}). Then, for $\bk \in B_6$, we get, with our convention,

\[
    T(\bk) = \begin{pmatrix} 0 & A(\bk) \\ A(\bk)^* & 0 \end{pmatrix}, ~\text{where }
A(\bk) := \begin{pmatrix} t & v\e^{-\ri\bk \cdot \bb_1} & u \\ u & t & v \e^{\ri(\bk \cdot \bb_1 + \bk \cdot \bb_2)} \\ v\e^{-\ri\bk \cdot \bb_2} & u & t \end{pmatrix}.
\]
To illustrate: an electron at site (4) may hop to neighbouring site (1) with amplitude $t$, to site (2) with amplitude $u$, and to site (3) in the next cell along by a translation by $\bb_2$, with amplitude $v$. This sets the fourth column in $T(\bk)$. 

\medskip

In our energy $\mce$, we are interested in the quantity $| T | := \sqrt{T^2}$. Note that 
\[
    T^2 = \begin{pmatrix} AA^* & 0 \\ 0 & A^*A \end{pmatrix}, \quad \text{hence} \quad
    | T | = \begin{pmatrix} \sqrt{AA^*} & 0 \\ 0 & \sqrt{A^*A} \end{pmatrix}.
\]
Recall that $\sigma(A A^*) = \sigma(A^* A)$ (except maybe for the eigenvalue $0$). So the trace per Carbon atom used in the energy $\mce$ is also
\[
    \VTr(| T |) := \frac16 \fint_{B_6}   \Tr_{\C^6} \left( |T|(\bk) \right) \rd \bk 
                        = \frac{1}{3} \fint_{B_6}   \Tr_{\C^3} \left( \sqrt{A^* A}(\bk) \right) \rd \bk.
\]

We record the following Lemma.
\begin{lemma} \label{lem:spectrum_T}
    Let $t,u,v \ge 0$ and consider their average and variance given by
    \[
        E := \frac13 (t + u + v), \quad \text{and} \quad s^2 := \frac{1}{3} \left[ (t-E)^2 + (u - E)^2 + (v - E)^2\right].
    \]
    Then the spectrum of $T$ is purely essential, of the form
    \[
        \sigma(T)  = \sigma_{\rm ess}(T) = [-b, -a] \cup [a, b],
    \]
    with
    \[
        \begin{cases}
            a & = \min \sigma \left( | T |(\bk = 0) \right) =  \frac{3}{\sqrt{2}} s \\
            b & = \max \sigma \left( | T |(\bk = 0) \right) = 3E.
        \end{cases}
    \]
\end{lemma}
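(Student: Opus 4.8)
\section*{Proof proposal}

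The plan is to pass to the Bloch--Floquet decomposition over $B_6$, reduce everything to the family of $3\times3$ matrices $A(\bk)^*A(\bk)$, and exploit a remarkably simple algebraic form of the latter.

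\emph{Step 1 (reduction to the fibers).} Since $T$ commutes with $\LL_6$--translations, standard Floquet theory gives $\sigma(T)=\bigcup_{\bk\in B_6}\sigma(T(\bk))$ (a finite union of continuous images of the compact connected torus $B_6$, hence already closed), and this spectrum is purely essential: any point of it lies either in an absolutely continuous band or is a flat/critical value of infinite multiplicity, so it can never be an isolated eigenvalue of finite multiplicity. From the off-diagonal block form of $T(\bk)$ one has $\det(T(\bk)-\lambda)=\det(\lambda^2 I-A(\bk)^*A(\bk))$, hence $\sigma(T(\bk))=\{\pm\sqrt\lambda:\lambda\in\sigma(A(\bk)^*A(\bk))\}$. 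It therefore suffices to show $\bigcup_\bk\sigma\!\big(A(\bk)^*A(\bk)\big)=[\tfrac92 s^2,\,9E^2]$.

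\emph{Step 2 (the key identity).} Write $A(\bk)=tI+uP+vQ(\bk)$, where $P$ is the cyclic shift $(1\,2\,3)$ and $Q(\bk)=\Delta(\bk)P^2$ with $\Delta(\bk)=\operatorname{diag}(e^{-\ri\bk\cdot\bb_1},\,e^{\ri\bk\cdot(\bb_1+\bb_2)},\,e^{-\ri\bk\cdot\bb_2})$, whose three diagonal entries have product $1$. A direct computation shows that $P$, $Q$, and $B:=P^*Q$ are unitary with $P^3=Q^3=B^3=I$ and $\Tr P=\Tr Q=\Tr B=0$; hence each has eigenvalues exactly $\{1,\omega,\omega^2\}$ ($\omega=e^{2\pi\ri/3}$), and $I+P+P^2=3\Pi_P$ for the rank-one spectral projection $\Pi_P$ onto the eigenvalue $1$, and likewise $Q+Q^2=3\Pi_Q-I$, $B+B^2=3\Pi_B-I$. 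Expanding $A^*A=(t^2+u^2+v^2)I+tu(P+P^*)+tv(Q+Q^*)+uv(P^*Q+Q^*P)$ and using $P^*=P^2$ etc., together with $t^2+u^2+v^2-tu-tv-uv=\tfrac92 s^2$, yields
\[
 A(\bk)^*A(\bk)=\tfrac92 s^2\, I \;+\; 3\big(tu\,\Pi_P+tv\,\Pi_Q(\bk)+uv\,\Pi_B(\bk)\big).
\]
Crucially $\Pi_P$ is the projection onto $(1,1,1)$ and is \emph{independent of} $\bk$, while $\Pi_Q,\Pi_B$ depend on $\bk$ only through $\Delta(\bk)$.

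\emph{Step 3 (the two endpoints).} Set $M(\bk):=tu\,\Pi_P+tv\,\Pi_Q(\bk)+uv\,\Pi_B(\bk)$, which, since $t,u,v\ge 0$, is positive semidefinite with $\Tr M(\bk)=tu+tv+uv$. Thus $\sigma(A^*A(\bk))=\tfrac92 s^2+3\,\sigma(M(\bk))$, so $\lambda_{\min}(A^*A(\bk))\ge\tfrac92 s^2$ and $\lambda_{\max}(A^*A(\bk))\le\tfrac92 s^2+3\Tr M(\bk)=(t+u+v)^2=9E^2$; this already gives $\tfrac3{\sqrt2}s\le\min\sigma(|T|)$ and $\max\sigma(|T|)\le 3E$. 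Both are attained at $\bk=0$: there $\Delta(0)=I$ forces $\Pi_Q(0)=\Pi_B(0)=\Pi_P$, so $M(0)=(tu+tv+uv)\Pi_P$ has eigenvalues $(tu+tv+uv,0,0)$ and $\sigma(|T|(\bk=0))=\{\tfrac3{\sqrt2}s,\tfrac3{\sqrt2}s,3E\}$. Since $2E^2-s^2=\tfrac23(tu+tv+uv)\ge0$ for $t,u,v\ge0$, we have $\tfrac3{\sqrt2}s\le 3E$, which identifies $a=\min\sigma(|T|(\bk=0))=\tfrac3{\sqrt2}s$ and $b=\max\sigma(|T|(\bk=0))=3E$.

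\emph{Step 4 (filling the interval) and the main obstacle.} The ordered singular values $\sigma_1(\bk)\le\sigma_2(\bk)\le\sigma_3(\bk)$ of $A(\bk)$ are continuous on the connected compact $B_6$, so $\sigma(|T|)=\bigcup_{i=1}^3[\sigma_i^{\min},\sigma_i^{\max}]$ is a union of three closed intervals; by Step 3, $\sigma_1^{\min}=a$, $\sigma_3^{\max}=b$, and since $\sigma_1(0)=\sigma_2(0)=a$ the bottom two intervals already overlap. The delicate point --- and the real obstacle --- is to glue on the top interval, i.e.\ to show no gap opens inside $(a,b)$. The identity of Step 2 reduces this to understanding how the three rank-one projections $\Pi_P$, $\Pi_Q(\bk)$, $\Pi_B(\bk)$ interact as $\bk$ varies, in particular to locating $\bk$ where the top two eigenvalues of $M(\bk)$ coincide. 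Evaluating $A(\bk)$ at a ``Dirac-type'' point $\bk_*$ where $\Delta(\bk_*)$ is a scalar cube root of unity (so $A(\bk_*)$ is again circulant up to a global phase) one finds its singular values are $\sqrt{\tfrac92 s^2+3tu}$, $\sqrt{\tfrac92 s^2+3tv}$, $\sqrt{\tfrac92 s^2+3uv}$, while $M(\bk)\succeq tu\,\Pi_P$ (and the two analogues) forces $\lambda_{\max}(A^*A(\bk))\ge\tfrac92 s^2+3\max(tu,tv,uv)$ with equality at $\bk_*$; matching these shows the second and third singular-value bands meet, so $\bigcup_i[\sigma_i^{\min},\sigma_i^{\max}]=[a,b]$. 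This last matching step, which uses the concrete arithmetic of the hopping values rather than any soft argument, is where the proof is genuinely technical.
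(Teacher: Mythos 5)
Your Steps 1--3 are correct, and they are essentially a fiber-wise sharpening of the paper's own argument: where the paper works with the unitaries $S_iS_j$ satisfying $M^3=I$, so that $\Omega_{ij}=M+M^*\ge -1$, you use the same relations for $P$, $Q(\bk)$, $B(\bk)$ inside the fiber, plus the trace-zero observation, to obtain the exact identity $A^*A(\bk)=\tfrac92 s^2 I+3\left(tu\,\Pi_P+tv\,\Pi_{Q(\bk)}+uv\,\Pi_{B(\bk)}\right)$. This yields at once the lower edge $\tfrac{3}{\sqrt2}s$ and, via $\Tr M(\bk)=tu+tv+uv$, the upper edge $3E$ (the paper instead gets the upper edge from the $\bk$-independence of $\Tr T(\bk)^2$, which is the same bookkeeping), together with attainment of both at $\bk=0$. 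Up to that point you have proved everything the paper's proof actually establishes, and everything the lemma is used for later in the paper (the gap $(-a,a)$ with $a=\tfrac{3}{\sqrt2}s$, and $0\in\sigma(T)$ exactly when $s=0$).

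The genuine gap is in Step 4. Your argument shows $\min_\bk\sigma_3(\bk)^2=\tfrac92 s^2+3\max(tu,tv,uv)$, attained at $\bk_*$; but to glue the top band onto the lower two you would need $\max_\bk\sigma_2(\bk)^2$ to reach this same value, and at $\bk_*$ you only get $\tfrac92 s^2+3\,\mathrm{med}(tu,tv,uv)$, strictly smaller unless the two largest products coincide. Nothing you wrote supplies the missing inequality, and in fact it is false in general: take $t=u=1$ and $v>0$ small. Then $T(1,1,v)=T(1,1,0)+vS_3$ with $\|vS_3\|=v$, and $T(1,1,0)$ is a direct sum of decoupled hexagons with spectrum $\{\pm1,\pm2\}$, so $\sigma(T(1,1,v))\subset\{\pm1,\pm2\}+[-v,v]$, whereas $a=1-v$ and $b=2+v$; hence $\sigma(T)$ has an internal gap (e.g.\ $\tfrac32\notin\sigma(T)$ for $v<\tfrac12$) and is not equal to $[-b,-a]\cup[a,b]$. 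So no proof of the literal interval statement can succeed for all $t,u,v\ge0$; the equality $\sigma(T)=[-b,-a]\cup[a,b]$ is an overstatement in the lemma itself (valid only when the singular-value bands actually overlap, e.g.\ for weak distortion). Note that the paper's own proof likewise only establishes the edge values $a$ and $b$ and their attainment at $\bk=0$, which is the only content used downstream, so your Steps 1--3 already match the paper; Step 4 should either be dropped or replaced by the remark that only $\sigma(T)\subset[-b,-a]\cup[a,b]$ with sharp endpoints is being claimed and needed.
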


At Fermi level $0$, this implies that $T$ is an insulator, unless $t = u = v$ (pristine phase).

\begin{proof}
    First we write $T(t,u,v) = t.S_1 + u.S_2 + v.S_3$, where $S_1$ only describes bonds with amplitude $t$ i.e. $S_1 = T(1,0,0)$, and so on. Note that $S_1^2 = S_2^2 = S_3^2 = I$, since no $t$--bond is connected to a different $t$--bond (and similarly with $u$ and $v$). This gives  
    \begin{equation} \label{eq:T2_with_Omega}
        T(t,u,v)^2 = \left( t^2 + u^2 + v^2 \right) I+tu \Omega_{12} + tv\Omega_{13} + uv\Omega_{23},
    \end{equation}
    where $\Omega_{12} = S_1S_2+S_2S_1$, $\Omega_{13} = S_1S_3+S_3S_1$ and $\Omega_{23} = S_2S_3+S_3S_2$. 
    
   \medskip
    
    The term $S_1S_2$ is the double-hop operator along $u$--$t$ bonds (in that order), hence $(S_1S_2)^3 = I$, as three double hops take any site around the hexagon with $t$ and $u$ bonds, back to its starting point. Note also that $S_1S_2$ is unitary. These remarks also hold for $S_1S_3$ and $S_2S_3$, so they all satisfy the simple operator equations $ MM^* = I$ and $M^3 = I$. Any such operator $M$ satisfies the following:
    \[ (M + M^*)^3 = 2 I+ 3(M+M^*),\]
    meaning that $M+M^*$ can only have eigenvalues $-1$ and $2$. Thus our previous terms $\Omega_{ij}$ are all lower bounded by $-1$. For non--negative $t,u,v$, this yields the value of $a$ set above:
   \[
        T(t,u,v)^2  \geq t^2+u^2+v^2 - (tu + tv + uv) = \frac{9}{2}s^2
    \]
and thus $|T(t,u,v)| \geq \frac{3}{\sqrt{2}} s$. The inequality is seen to be attained by computing the eigenvalues of $T(\bk = 0)$. In the pristine case ($s = 0$) we see the Dirac cones at $\bk = 0$ in $B_6$. We conclude that the Kekulé distortion opens a gap of size exactly $2\frac{3}{\sqrt{2}}s$. This is analogous to the one-dimensional Peierls case.

    \medskip
    
    Finally, note that $\Tr (T^2(\bk)) = 6 (t^2 + u^2 + v^2)$ is independent of $\bk$. In particular, the highest eigenvalue of $T(\bk)$ is reached at $\bk = 0$: at this point, the two other positive values takes their minimal value $\frac{3}{\sqrt{2}}s$. The result follows.
\end{proof}

%%%%%%%%%%%%%%
\subsubsection{The case of pristine graphene.} 
\label{ssec:pristine}
In the case where $t = u = v$, the operator $T$ is also $\LL_2$--periodic. This means that the operator $T(t,t,t)$ is also a Fourier multiplier on $\C^2$ with respect to the Brillouin zone $B_2$ associated to $\LL_2$, (see Figure~\ref{fig:brillouin}), with the classical formula
\[ T_2(t, \bk) := T(t, t,t, \bk) = t
\begin{pmatrix}
    0 & 1+\e^{-\ri \bk \cdot \ba_1}+\e^{-\ri \bk \cdot \ba_2} \\ 1+\e^{\ri \bk \cdot \ba_1}+\e^{\ri \bk \cdot \ba_2} & 0 
\end{pmatrix}.\]
Setting
\[
    m(\bk) := |1+\e^{\ri \bk \cdot \ba_1}+\e^{\ri \bk \cdot \ba_2}|,
\]
we can easily check that its spectrum is 
\[
    \sigma(T_2(t,\bk)) = \{\ \pm t.m(\bk) \ \}.
\]
The expression for $m$ yields the well-known band diagram for pristine graphene, with two distinctive Dirac cones shown in Figure~\ref{fig:bdiagram} at points $\bK, \bK'$ in $B_2$. At these points, $T_2(\bk = \bK)$ and $T_2(\bk = \bK')$ are the null matrices, hence have $0$ as an eigenvalue of multiplicity $2$.

\medskip

The Brillouin zone $B_6$ is exactly one third of $B_2$. This means that, studying $T(t,t,t)$ with respect to the $\LL_6$ lattice (instead of the $\LL_2$ lattice), we have the band folding formula
\[
    \sigma (T(\bk)) = \sigma(T_2(\bk)) \cup \sigma(T_2(\bk + \bb_1^*)) \cup \sigma(T_2(\bk + \bb_2^*)).
\]
In particular, after band folding, the two distinct Dirac cones on $B_2$ are both located at $\bk = \bnull$ in $B_6$. 
\medskip

We display the band diagram of $T(t,u,v)$ and $T(t,t,t)$ in Figure~\ref{fig:bdiagram}.

\begin{figure}[!ht]
    \centering
    \includegraphics[scale=0.7]{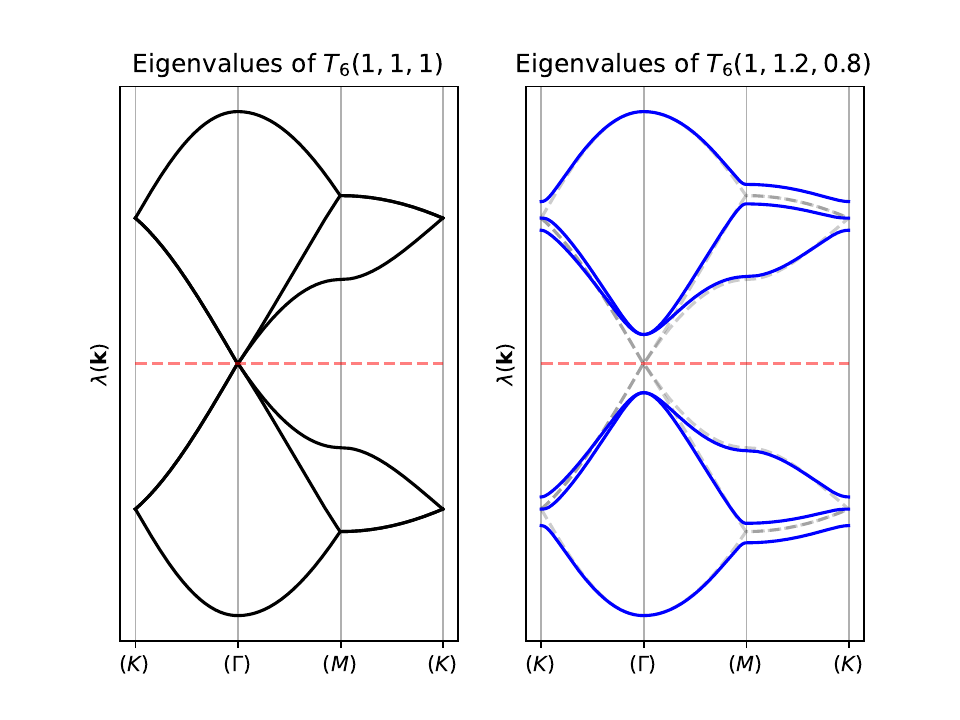}
    \caption{Band diagram cut of graphene over $B_6$ with six surfaces for each eigenvalue of $T(t,u,v,\bk)$, in the undistorted case (left, with two Dirac cones at central point $(\Gamma)$), and in the distorted case (right, with a bandgap).}
    \label{fig:bdiagram}
\end{figure}

%%%%%%%%%%%%%%%%%%%%%%%%%%%%%%
\section{Kekulé symmetry of minimizers}

We now prove Theorem~\ref{thm:Ksym_short}: a minimizer for our considered energy must retain a symmetry by which $t=u$ or $t=v$ or $u=v$, regardless of the positive value of $\mu$. In what follows, we again denote by
\[
E := \frac13 (t + u + v) \quad \text{and} \quad s^2 := \frac{1}{3} \left[ (t-E)^2 + (u - E)^2 + (v - E)^2\right]
\]
the average and variance of a triple $t,u,v$. We also introduce the variables
\[
    S := \sqrt{t^2 + u^2 + v^2}, \quad \text{and} \quad  W := tuv.
\]
Note that we have $S^2 = 3 (E^2 + s^2)$, hence, in particular, $S \ge \sqrt{3} E$. The following statement is a more precise version of Theorem~\ref{thm:Ksym_short}.

\begin{lemma} \label{thm:Ksym_long}
    For all $t,u,v \in \R$, there are $0 \le \tilde{t} \le \tilde{v} $ such that $\mce(t,u,v)~\geq~\mce(\tilde{t}, \tilde{t}, \tilde{v})$. These are chosen as follows: 
    \begin{enumerate}
        \item If $E \leq s/\sqrt{2}$, take $\tilde{v}:= S$ and $\tilde{t}:= 0$. 
        \item If $E >    s/\sqrt{2}$, take $\tilde{v} := E +s\sqrt{2}$ and $\tilde{t} := E -s/\sqrt{2}$. 
    \end{enumerate}
    The energy inequality is strict if $t,u,v$ is not Kekulé-symmetric. 
\end{lemma}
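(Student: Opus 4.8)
The plan is to disentangle the three symmetric invariants $E$, $s$ and $W:=tuv$, using that the elastic term sees only $(E,s)$ while the kinetic term, being invariant under permutations of $t,u,v$ (as noted after the definition of $\mce$), is a function $F(E,s,W):=\VTr\,|T(t,u,v)|$. Two preliminary identities, with $S^{2}:=t^{2}+u^{2}+v^{2}=3(E^{2}+s^{2})$: expanding gives $\tfrac12\big((t-1)^{2}+(u-1)^{2}+(v-1)^{2}\big)=\tfrac12(S^{2}-6E+3)=\tfrac32\big((E-1)^{2}+s^{2}\big)$, so the elastic part of $\mce$ equals $\tfrac{3\mu}{4}\big((E-1)^{2}+s^{2}\big)$; and $\Tr_{\C^{3}}\big(A(\bk)^{*}A(\bk)\big)=3S^{2}$ for every $\bk$, while $\det A(\bk)=W\big(3-z(\bk)\big)+\tfrac{27}{2}Es^{2}$ with $z(\bk):=\e^{\ri\bk\cdot(\bb_{1}+\bb_{2})}+\e^{-\ri\bk\cdot\bb_{1}}+\e^{-\ri\bk\cdot\bb_{2}}$, which is affine in $W$. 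The proof then splits along the stated dichotomy.

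\emph{Case $E\le s/\sqrt2$ (equivalently $3E\le S$).} By Cauchy--Schwarz applied to the three eigenvalues of $A^{*}A(\bk)$, $\Tr_{\C^{3}}\sqrt{A^{*}A(\bk)}\le\big(3\,\Tr_{\C^{3}}(A^{*}A(\bk))\big)^{1/2}=3S$, so $\VTr\,|T(t,u,v)|\le S$; since $S_{3}^{\,2}=I$ one also has $\VTr\,|T(0,0,S)|=S$. The elastic parts satisfy $\tfrac{\mu}{4}\sum_{i}(t_{i}-1)^{2}-\tfrac{\mu}{4}\big(1+1+(S-1)^{2}\big)=\tfrac{\mu}{2}(S-3E)\ge0$. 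Adding the two inequalities yields $\mce(t,u,v)\ge\mce(0,0,S)$, so $(\tilde t,\tilde v)=(0,S)$ works (and $0\le\tilde t\le\tilde v$ trivially). For strictness, equality in Cauchy--Schwarz forces $A^{*}A(\bk)=S^{2}I$ for a.e.\ $\bk$, hence for all $\bk$ (the entries are trigonometric polynomials); the entry $tu+tv\,\e^{-\ri\bk\cdot\bb_{1}}+uv\,\e^{\ri\bk\cdot\bb_{2}}\equiv0$ then forces at least two of $t,u,v$ to vanish, i.e.\ a Kekulé triple. Thus the inequality is strict unless $(t,u,v)$ is Kekulé-symmetric.

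\emph{Case $E>s/\sqrt2$.} One may assume $s>0$ (else $(t,u,v)=(E,E,E)$ is already Kekulé) and then $E>0$. Fix $(E,s)$; the admissible triples form the circle $\mathcal C:=\{t^{2}+u^{2}+v^{2}=S^{2}\}\cap\{t+u+v=3E\}$, on which a Lagrange computation shows that $W=tuv$ is critical exactly at the six permutations of the two Kekulé triples $(E\mp p,E\mp p,E\pm2p)$, $p:=s/\sqrt2$, attaining there the values $W_{\pm}=(E\mp p)^{2}(E\pm2p)$ with $W_{+}-W_{-}=4p^{3}>0$. Hence $W(\mathcal C)=[W_{-},W_{+}]$, and the proposed $(\tilde t,\tilde t,\tilde v)=(E-p,E-p,E+2p)$ — which lies in $\{0\le\tilde t\le\tilde v\}$ since $E>p$ — is precisely the triple realising the maximal product $W_{+}$. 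As the elastic term is constant on $\mathcal C$, it suffices to prove that $W\mapsto F(E,s,W)$ attains its maximum over $[W_{-},W_{+}]$ at $W_{+}$, strictly at interior points. The route I would follow has two steps. (i) For fixed $\bk$, the characteristic polynomial of $A^{*}A(\bk)$ has coefficients that are, in order of degree, constant, affine in $W$, and equal to $|\det A(\bk)|^{2}=\big|W(3-z(\bk))+\tfrac{27}{2}Es^{2}\big|^{2}$ — the squared modulus of a complex affine function of $W$; from this structure one shows $W\mapsto\Tr_{\C^{3}}\sqrt{A^{*}A(\bk)}$ is convex, and by integration over $B_{6}$, so is $F(E,s,\cdot)$, whence its maximum on $[W_{-},W_{+}]$ is at an endpoint. (ii) One identifies the maximising endpoint by comparing the two Kekulé triples with the help of Lemma~\ref{lem:spectrum_T} (their band edges $\tfrac{3}{\sqrt2}s$ and $3E$ are explicit) and the observation that $(E-p,E-p,E+2p)$ is the point of $\mathcal C$ with the largest single coordinate, equivalently the point of $\mathcal C$ closest to a vertex $(0,0,S)$ of the sphere of radius $S$, where $\VTr\,|T|$ is maximal ($=S$) by the Cauchy--Schwarz bound above; this gives $F(W_{+})\ge F(W_{-})$. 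Strict convexity of $F(E,s,\cdot)$ (the integrand is $W$-constant only at $\bk=0$) then makes the inequality strict for every interior $W$, i.e.\ for every non-Kekulé $(t,u,v)$.

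The crux is step (i)--(ii) of the last case: establishing convexity of $W\mapsto F(E,s,W)$ and that the larger-product Kekulé endpoint wins is genuine calculus on the roots of a cubic with $W$-dependent coefficients, and is delicate near the zone centre $\bk=0$, where $z(\bk)=3$ and the spectrum of $A^{*}A(\bk)$ collapses to $\{9E^{2},\tfrac92 s^{2},\tfrac92 s^{2}\}$ independently of $W$ (as in the proof of Lemma~\ref{lem:spectrum_T}); all the $W$-dependence of $F$ thus comes from $\bk$ away from the centre, where one must balance $\Tr_{\C^{3}}\big((A^{*}A(\bk))^{2}\big)$, which decreases in $W$, against $|\det A(\bk)|^{2}$, which ultimately increases in $W$. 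Should convexity fail, plain monotonicity of $W\mapsto F(E,s,W)$ on $[W_{-},W_{+}]$ would suffice, with the same analytical difficulty. Everything else is bookkeeping with symmetric functions.
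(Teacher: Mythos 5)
Your Case 1 is fine: the Cauchy--Schwarz bound $\Tr_{\C^3}\sqrt{A^*A(\bk)}\le\sqrt{3\,\Tr(A^*A(\bk))}=3S$, the identification $\VTr|T(0,0,S)|=S$, and the elastic comparison $6E\le 2S$ reproduce the paper's first case (which instead uses concavity of $A\mapsto\VTr\sqrt{A}$ and the traceless operators $\Omega_{ij}$); your equality analysis forcing two vanishing amplitudes is also sound. Likewise your reduction of Case 2 to the invariants $(E,s,W)$, the Lagrange argument showing that $W$ is extremal on the circle $\sC$ exactly at the two Kekulé triples, and the formulas for $\alpha$, $\beta$, $\gamma=|\det A|^2$ all match the paper.

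The gap is precisely at the step you yourself flag as ``the crux'': you never prove that $W\mapsto F(E,s,W)$ is convex (or monotone), and your proposed way of selecting the winning endpoint does not work. First, convexity of $W\mapsto\Tr_{\C^3}\sqrt{A^*A(\bk)}$ is only asserted ``from this structure''; the paper instead proves \emph{monotonicity}, by writing $g=\sqrt{x_1}+\sqrt{x_2}+\sqrt{x_3}$ as the unique solution above $\sqrt{\alpha}$ of $\tfrac14(g^2-\alpha)^2=\beta(W)+2\sqrt{\gamma(W)}\,g$, i.e.\ the intersection of a fixed quartic with a line whose intercept $\beta(W)$ and slope $2\sqrt{\gamma(W)}$ are both nondecreasing in $W$ on the relevant range. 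Making $\sqrt{\gamma(W)}$ nondecreasing is exactly where the hypothesis $E>s/\sqrt2$ enters analytically: one needs $W\ge-\gamma_1/(2\gamma_2)$ on $\sC$, which the paper obtains from the lower Kekulé bound on $W$ together with the nontrivial Fourier estimate $\Re(Z(\bk))/|Z(\bk)|^2\ge 1/6$ (Lemma~\ref{lem:bound_W_second_case}); in your write-up the condition $E>s/\sqrt2$ is only used to get $\tilde t\ge 0$, so this essential mechanism is missing. Second, your step (ii) cannot identify the maximizing endpoint: both Kekulé triples $(E\mp p,E\mp p,E\pm 2p)$ have the \emph{same} mean $E$ and standard deviation $s$, so Lemma~\ref{lem:spectrum_T} gives them identical band edges $\tfrac{3}{\sqrt2}s$ and $3E$ and cannot distinguish them, and ``being the point of $\sC$ closest to a vertex $(0,0,S)$, where the Cauchy--Schwarz bound is attained'' is not an inequality for $\VTr|T|$. (With the paper's monotonicity in $W$ no endpoint comparison is needed, and strictness for non-Kekulé triples follows from the strict inequality $W<W_+$ together with strict increase of $g$ for a.e.\ $\bk$; under your convexity route, strictness would need a separate argument as well.) So as it stands the second case is a plan, not a proof.
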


Note that having $ 0 \le \tilde{t} \le \tilde{v}$ is in line with the physical remark above, that a Kekulé configuration should have one third short bonds and two thirds long, and vice versa in hopping amplitudes. Also, if we take a flat positive triplet $(t,t,t)$, naturally $\tilde{t}$ and $\tilde{v}$ are both set to $t$, since a pristine configuration is also Kekulé-symmetric.

\begin{proof}[Proof of Lemma \ref{thm:Ksym_long}]

We treat the two cases separately. The condition $E > \frac{s}{\sqrt{2}}$ implies $E > 0$. This inequality also reads $3 E > S$, or   $ t + u + v > \sqrt{t^2 + u^2 + v^2}.$

\medskip

\underline{First case}. 

We use the following lemma. 
\begin{lemma} \label{lem:sphere_bound}
    For all $t,u,v$: 
    \[ \VTr  ~| T(t,u,v) | \leq \VTr  ~| T(S,0,0) |. \]
    There is equality if and only if $t,u,v$ has two zero values. 
\end{lemma}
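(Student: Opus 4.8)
The plan is to extract both the inequality and its equality case from a single application of the concavity of the square root to the singular values of the Bloch matrix $A(\bk)$, and then to pin down the equality case by an explicit computation of $\det A(\bk)$. First I would reduce everything to $A(\bk)$: writing $\sigma_1(\bk),\sigma_2(\bk),\sigma_3(\bk)\ge 0$ for the singular values of $A(\bk)$, one has $\VTr|T(t,u,v)| = \tfrac13\fint_{B_6}\Tr_{\C^3}\sqrt{A(\bk)^*A(\bk)}\,\rd\bk = \tfrac13\fint_{B_6}\big(\sigma_1+\sigma_2+\sigma_3\big)(\bk)\,\rd\bk$. Two elementary facts feed the argument. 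On the one hand, for every $\bk$ each row of $A(\bk)$ consists of the three numbers $t,u,v$ up to unit phases, so
\[
\sum_{i=1}^3\sigma_i(\bk)^2 = \sum_{i,j=1}^3 |A_{ij}(\bk)|^2 = 3(t^2+u^2+v^2) = 3S^2 .
\]
On the other hand $S_1 = T(1,0,0)$ is a self-adjoint involution, so $|T(S,0,0)| = |S|\,|S_1| = S\,I$ and $\VTr|T(S,0,0)| = S$; it therefore remains to show $\VTr|T(t,u,v)| \le S$ with the stated equality case.

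Next I would apply Jensen's inequality. Averaging $\sigma_i(\bk) = \sqrt{\sigma_i(\bk)^2}$ over $i\in\{1,2,3\}$ and $\bk\in B_6$ with the uniform probability measure, concavity of $x\mapsto\sqrt{x}$ on $[0,\infty)$ gives
\[
\VTr|T(t,u,v)| = \frac13\fint_{B_6}\sum_{i=1}^3\sqrt{\sigma_i(\bk)^2}\,\rd\bk \ \le\ \sqrt{\frac13\fint_{B_6}\sum_{i=1}^3\sigma_i(\bk)^2\,\rd\bk} = \sqrt{S^2} = S = \VTr|T(S,0,0)| ,
\]
which is the announced bound. Since the square root is strictly concave, equality forces $\sigma_i(\bk)^2$ to be almost everywhere equal to its average $S^2$, that is $A(\bk)^*A(\bk) = S^2 I_3$ for a.e.\ $\bk$, hence for all $\bk\in B_6$ by continuity of $\bk\mapsto A(\bk)$.

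Finally I would analyse the equality condition $A(\bk)^*A(\bk) \equiv S^2 I_3$. Taking determinants yields $|\det A(\bk)|^2 \equiv S^6$, while a direct expansion of the $3\times 3$ determinant gives
\[
\det A(\bk) = t^3+u^3+v^3 - tuv\,g(\bk), \qquad g(\bk) := \e^{-\ri\bk\cdot\bb_1} + \e^{-\ri\bk\cdot\bb_2} + \e^{\ri\bk\cdot(\bb_1+\bb_2)} .
\]
Along the loop $\bk(\theta) = \tfrac{\theta}{2\pi}(\bb_1^*-\bb_2^*)$, $\theta\in[0,2\pi)$, one has $g(\bk(\theta)) = 1 + 2\cos\theta$, which is real and sweeps the whole interval $[-1,3]$; hence $\det A(\bk(\theta))$ ranges over a nondegenerate real interval, so $|\det A(\bk)|^2$ cannot be constant unless $tuv = 0$. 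By the permutation symmetry of the statement I may then take $v = 0$, in which case $A(\bk)$ is $\bk$-independent and circulant with first row $(t,0,u)$, with eigenvalues $t+u$, $t+u\omega$, $t+u\omega^2$ where $\omega = \e^{2\pi\ri/3}$; requiring all three moduli to equal $S=\sqrt{t^2+u^2}$ gives $(t+u)^2 = t^2+u^2$ and $t^2-tu+u^2 = t^2+u^2$, both forcing $tu = 0$. Thus two of $t,u,v$ must vanish, and conversely if, say, $u=v=0$ then $|T(t,0,0)| = |t|\,I$ and $S=|t|$, so the inequality is saturated. The one genuinely delicate point is this last step — excluding a triple with all three entries nonzero from saturating the bound — and it is exactly what the closed formula for $\det A(\bk)$ takes care of; the rest is routine.
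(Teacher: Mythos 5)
Your proof is correct, but it follows a genuinely different route from the paper. The paper works with the decomposition $T^2=(t^2+u^2+v^2)I+tu\,\Omega_{12}+tv\,\Omega_{13}+uv\,\Omega_{23}$ from the proof of Lemma~\ref{lem:spectrum_T} and considers $f(x,y,z)=\VTr\bigl[(t^2+u^2+v^2)I+x\Omega_{12}+y\Omega_{13}+z\Omega_{23}\bigr]^{1/2}$: this is strictly concave, and since the $\Omega_{ij}$ have null diagonal one has $\nabla f(0)=0$, so $f(x,y,z)<f(0)$ whenever $(tu,tv,uv)\neq(0,0,0)$ --- inequality and equality case come out in one stroke. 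You instead apply scalar Jensen to the singular values of $A(\bk)$, using only the Frobenius identity $\Tr(A^*A)=3S^2$, which gives the bound $\VTr|T(t,u,v)|\le S=\VTr|T(S,0,0)|$ very cheaply and without invoking concavity of trace functionals of matrices; the price is that the equality case needs separate work, which you handle correctly: strict concavity of $\sqrt{\cdot}$ forces $A^*A(\bk)\equiv S^2I_3$, the explicit formula $\det A(\bk)=t^3+u^3+v^3-tuv\,g(\bk)$ together with the loop $\bk(\theta)=\tfrac{\theta}{2\pi}(\bb_1^*-\bb_2^*)$ (along which $g=1+2\cos\theta$ is real and nonconstant) rules out $tuv\neq0$, and the circulant structure at $v=0$ (where $A^*A=S^2I$ makes $A/S$ unitary, so all eigenvalue moduli equal $S$) forces $tu=0$. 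Both arguments are sound; the paper's is shorter and reuses machinery already set up for Lemma~\ref{lem:spectrum_T} (and adapts to any strictly concave function of $T^2$), while yours is more elementary and self-contained and makes the rigidity of the equality case completely explicit. Two small points worth stating if you write this up: justify in one line that $S_1$ is a self-adjoint involution (the $t$-bonds form a perfect matching, or note $S_1(\bk)=\begin{pmatrix}0&I_3\\ I_3&0\end{pmatrix}$), and say explicitly why $A^*A=S^2I_3$ lets you pass to eigenvalue moduli (unitarity of $A/S$, or normality of circulant matrices).
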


In other terms: in the $3$-dimensional $(t,u,v)$ space, on the sphere centered at the origin with radius $S$, the quantum energy term has its minimum at points $(S,0,0)$, $(0,S,0)$, $(0,0,S)$.

\begin{proof}[Proof of Lemma \ref{lem:sphere_bound}]
We recall that if $f : \R^+ \to \R$ is convex/concave, then $A \mapsto \Tr(f(A))$ is also convex/concave on the set of positive semi-definite Hermitian matrices. From~\eqref{eq:T2_with_Omega}, we have
\begin{align*}
    \VTr ~|T |
    & = \VTr ~\sqrt{ T^2 }
     =  \VTr \sqrt{ t^2+u^2+v^2 + tu \Omega_{12} + tv \Omega_{13} + uv \Omega_{23} }
\end{align*}
where as before the $\Omega$ operators describe the three different double hops available in the $(t,u,v)$ configuration. Introduce the functional over $\mathbb{R}^3$
\[
    f: x,y,z \mapsto \VTr \left[ (t^2+u^2+v^2) I + x \Omega_{12} + y \Omega_{13} + z \Omega_{23} \right]^{1/2}.
\]
The function $f$ is strictly concave, and since the $\Omega$'s have null diagonal (hence null trace), we find $\nabla f(0) = 0$. Hence for all non-zero triplet $x,y,z$, we have $f(x,y,z) < f(0)$. We compute $f(0) = 6S$, which is the value of $\VTr \ |T(S,0,0)|$, concluding the proof of Lemma~\ref{lem:sphere_bound}.
\end{proof}

The previous Lemma means that replacing $(t,u,v)$ by $(S, 0,0)$ always lowers the kinetic quantum energy in $\mce$. As for the distortion energy, we have 
\[
    \begin{cases}
        (t-1)^2 + (u-1)^2 + (v-1)^2 & = S^2 + 3 - 6 E  \\
        (S-1)^2 + 1^2 + 1^2 & =  S^2 + 3 - 2S,
    \end{cases}
\]
hence the distortion energy is also lowered with this replacement whenever $6E \le 2S$, which is also $E \le \frac{s}{\sqrt{2}} $. This proves the first case of Theorem~\ref{thm:Ksym_long}.

\medskip

\underline{Second case}. From now on, we assume that $E > s/\sqrt{2}$. We will exhibit a lower bound on the quantum energy that is pointwise in $\textbf{k} \in B_6$. In what follows, we fix $\bk$ and shorten $T(t,u,v,\textbf{k})$ to $T$. 

\medskip

We are interested in the quantity
\[
    \Tr_{\C^6} \left( \sqrt{T^2} \right) = 2 \Tr_{\C^3} \left( \sqrt{A^* A} \right) = 2 \left( \sqrt{x_1} + \sqrt{x_2} + \sqrt{x_3}\right),
\]
where $x_1, x_2, x_3$ are the eigenvalues of the $3 \times 3$ Hermitian matrix $A^*A$. They are the roots of the characteristic polynomial 
\[
    \chi(X) := \chi_{A^* A}(X) := \det(XI_3 - A^* A).
\]
Let us denote the coefficients of $\chi(\cdot)$ by $\chi(X) = X^3 - \alpha X^2  + \beta X - \gamma$. Before we compute these coefficients, we make the remark that they must be {\em symmetric polynomials} in the variables $(t,u,v)$, since they play similar roles. In particular, they can be written solely with the variables $E$, $S$ (or $s$) and $W$. On the other hand, the distortion energy only depends on $E$ and $S$. Our main goal is therefore to prove that, as a function of $W$, with $E$ and $S$ fixed, the sum  $\sqrt{x_1} + \sqrt{x_2} + \sqrt{x_3}$ is {\em increasing} in $W$. This will prove the result thanks to the following elementary result.

\begin{lemma} \label{lem:maxtuv}
    Consider the circle $\mathscr{C} \subset \R^3$ of triplets $t,u,v$ sharing identical values of $E$ and $S$, it has center $(E,E,E)$ and radius $\sqrt{3}s$, namely
    \[
        \sC := \left\{ t,u,v \in \R^3 :  \frac13 (t + u + v)  = E, ~ \frac{1}{3} \left[ (t-E)^2 + (u - E)^2 + (v - E)^2 \right] =  s^2 \right\}.
    \]
    Then, the extrema of $W = tuv$ over $\mathscr C$ are Kekulé-symmetric: for $(t,u,v) \in \sC$,
    \[
    (E - s\sqrt{2}) \left( E + \frac{s}{\sqrt{2}} \right) \left( E + \frac{s}{\sqrt{2}} \right) \leq tuv \leq (E + s\sqrt{2})\left( E - \frac{s}{\sqrt{2}} \right) \left( E - \frac{s}{\sqrt{2}} \right).
    \]
\end{lemma}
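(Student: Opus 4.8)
The plan is to parametrise the circle $\sC$ explicitly and reduce the claim to a one–variable optimisation. Writing $(t,u,v) = (E,E,E) + \sqrt{3}\,s\,\bigl(\cos\theta\,\mathbf{e}_1 + \sin\theta\,\mathbf{e}_2\bigr)$ for an orthonormal pair $\mathbf{e}_1,\mathbf{e}_2$ spanning the plane $\{t+u+v = 3E\}$, the product $W(\theta) = tuv$ becomes a trigonometric polynomial in $\theta$. A convenient choice is $\mathbf{e}_1 = \tfrac{1}{\sqrt6}(2,-1,-1)$, $\mathbf{e}_2 = \tfrac1{\sqrt2}(0,1,-1)$; then $t = E + s\sqrt2\cos\theta$, $u = E - \tfrac{s}{\sqrt2}\cos\theta + \tfrac{s\sqrt3}{\sqrt2}\sin\theta$, $v = E - \tfrac{s}{\sqrt2}\cos\theta - \tfrac{s\sqrt3}{\sqrt2}\sin\theta$. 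Expanding, the $\sin\theta$–dependence enters only through $\sin^2\theta$, and after using $\cos^2\theta + \sin^2\theta = 1$ one is left with $W(\theta) = E^3 - \tfrac32 E s^2 + \tfrac{s^3}{\sqrt2}\bigl(\cos^3\theta - 3\cos\theta\sin^2\theta\bigr)$. The bracket is exactly $\cos(3\theta)$, so $W(\theta) = E^3 - \tfrac32 Es^2 + \tfrac{s^3}{\sqrt2}\cos(3\theta)$.

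From this closed form the conclusion is immediate: $W$ ranges over $\bigl[E^3 - \tfrac32 Es^2 - \tfrac{s^3}{\sqrt2},\ E^3 - \tfrac32 Es^2 + \tfrac{s^3}{\sqrt2}\bigr]$, with the minimum attained at $3\theta \in \pi + 2\pi\Z$ and the maximum at $3\theta \in 2\pi\Z$. One then checks by direct multiplication that $(E - s\sqrt2)(E + \tfrac{s}{\sqrt2})^2 = E^3 - \tfrac32 Es^2 - \tfrac{s^3}{\sqrt2}$ and $(E + s\sqrt2)(E - \tfrac{s}{\sqrt2})^2 = E^3 - \tfrac32 Es^2 + \tfrac{s^3}{\sqrt2}$, which matches the stated bounds. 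Moreover $\cos(3\theta) = \pm1$ forces $\theta$ to be a multiple of $\pi/3$, and inspecting the three coordinates at such angles shows that exactly two of $t,u,v$ coincide — so the extremisers are precisely the Kekulé-symmetric triples, as claimed.

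The only mild subtlety is bookkeeping: one must make sure the parametrisation sweeps the whole circle and that the identification of the two endpoint triples with $(E\mp s\sqrt2, E\pm\tfrac{s}{\sqrt2}, E\pm\tfrac{s}{\sqrt2})$ is done up to permutation (the labelling of $t,u,v$ on $\sC$ is only defined modulo the $\mathbb{Z}/3$ rotation and the reflections). I expect the trigonometric collapse $\cos^3\theta - 3\cos\theta\sin^2\theta = \cos 3\theta$ to be the one computational step worth displaying; everything else is a short verification. An alternative, essentially equivalent route is Lagrange multipliers: at an extremum of $W$ on $\sC$ the gradient $(uv, tv, tu)$ is a linear combination of $(1,1,1)$ and $(t-E,u-E,v-E)$, which after elimination gives that two of the coordinates satisfy the same quadratic and hence are equal; then one evaluates $W$ at the two resulting configurations. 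I would present the parametric computation as the main argument since it also pins down the global (not merely critical) nature of the bounds.
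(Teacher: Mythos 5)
Your proposal is correct, and your main argument is genuinely different from the paper's. The paper proves this lemma by the Lagrange-multiplier route you relegate to a remark: at an extremum of $W$ on $\sC$ there are multipliers $\lambda,\mu$ with $tuv = \lambda t^2 + \mu t = \lambda u^2 + \mu u = \lambda v^2 + \mu v$, so $t,u,v$ are roots of one quadratic and two of them must coincide; the stated bounds are then obtained by evaluating $tuv$ at the two Kekulé-symmetric points of $\sC$. Your explicit parametrisation instead yields the closed form $W(\theta) = E^3 - \tfrac32 Es^2 + \tfrac{s^3}{\sqrt2}\cos 3\theta$ (I checked the algebra: it is right, as are the identities $(E \mp s\sqrt2)(E \pm \tfrac{s}{\sqrt2})^2 = E^3 - \tfrac32 Es^2 \mp \tfrac{s^3}{\sqrt2}$), which gives the exact range of $W$ on $\sC$ in one stroke, makes the global (not merely critical-point) nature of the bounds manifest, and identifies all extremisers as the angles $\theta \in \tfrac{\pi}{3}\Z$, i.e.\ exactly the Kekulé-symmetric triples; it also trivialises degenerate cases such as $s=0$ or a vanishing multiplier, which the paper's one-line quadratic argument passes over silently. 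The paper's version is shorter and coordinate-free, and all the lemma's later use requires is the two endpoint values, so both routes are adequate; yours simply buys more explicit information (the full distribution of $W$ over the circle, and the $\Z/3$ plus reflection symmetry acting as $\theta \mapsto \theta + \tfrac{2\pi}{3}$, $\theta \mapsto -\theta$) at the cost of a short computation.
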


\begin{proof}[Proof of Lemma~\ref{lem:maxtuv}]
The constraints imply that there exist $\lambda, \mu$ such that 
\[tuv = \lambda t^2 + \mu t = \lambda u^2 + \mu u = \lambda v^2 + \mu v, \]
 thus $t,u,v$ are roots of a single quadratic, meaning two of the three are equal. Computing Kekulé-symmetric points on $\sC$ yields the above expressions with $E\pm~s\sqrt2$, $E\pm s/\sqrt2$, which then form $\tilde{t}, \tilde{v}$ in Lemma~\ref{thm:Ksym_long}. 
\end{proof}

\medskip

Let us now compute the coefficients of $\chi$. First, we can easily check that 
\[
    \alpha = \Tr(A^* A) = 3 (t^2 + u^2 + v^2) = 3 S^2.
\]
Next, we have, 
\begin{align*}
    \gamma & = \det(A^* A) = | \det(A) |^2 = \left| t^3 + u^3 + v^3 - tuv \left( \re^{-\ri\bk \cdot \bb_1} + \re^{-\ri\bk \cdot \bb_2}  + \re^{\ri\bk \cdot (\bb_1 + \bb_2)} \right) \right|^2 \\
    & = \left| \frac{27}{2} E s^2 + W Z(\bk) \right|^2 ,
\end{align*}
where we set
\[
    Z(\bk) := 3 - \re^{-\ri \bk \cdot \bb_1} - \re^{- i \bk \cdot \bb_2}  - \re^{\ri \bk \cdot (\bb_1 + \bb_2)}.
\]
This is the only variable which depends on $\bk$. Finally, the last coefficient is more tedious to compute, although straightforward, and we find
\[
    \beta = 3S^4-\frac34(9E^2 - S^2)^2 + 6WE.\Re(Z(\textbf{k})).
\]

Recall that we are interested in the sum $\sqrt{x_1} + \sqrt{x_2} + \sqrt{x_3}$, where $x_1, x_2, x_3$ are the (positive) roots of $\chi(\cdot)$. It is unclear that this sum has an explicit expression in terms of the coefficients of $\chi$. However, this quantity solves a simple quartic equation. To see this, we note that
\[
    \left( \sqrt{x_1} + \sqrt{x_2} + \sqrt{x_3} \right)^2 = x_1 + x_2 + x_3 + 2 (\sqrt{x_1x_2} + \sqrt{x_2x_3} + \sqrt{x_3x_1})
\]
and that
\[
    (\sqrt{x_1x_2} + \sqrt{x_2x_3} + \sqrt{x_3x_1})^2 = x_1 x_2 + x_2 x_3 + x_3 x_1 + 2 \sqrt{x_1 x_2 x_3} \left( \sqrt{x_1} + \sqrt{x_2} + \sqrt{x_3} \right).
\]
Together with the fact that $x_1 + x_2 + x_3 = \alpha$, that $x_1 x_2 + x_2 x_3 + x_3 x_1 = \beta$ and that $x_1 x_2 x_3 = \gamma$, we obtain the implicit equation (we set $g := \sqrt{x_1} + \sqrt{x_2} + \sqrt{x_3}$):
\begin{equation} \label{eq:g_inter}
    \frac{1}{4}\left(g^2 - \alpha \right)^2 = \beta + 2\sqrt{\gamma} g.
\end{equation}

We consider $E$ and $S$ (and $s$) to have fixed values, and we see the coefficients $\beta$ and $\gamma$ depend on $W$ only ($\alpha$ is independent of $W$). They are of the form
\[
        \beta = \beta(W) =: \beta_0 + \beta_1 W, 
        \quad \text{and} \quad 
        \gamma = \gamma(W) =: \gamma_0 + \gamma_1 W + \gamma_2 W^2.
\]
Equation~\eqref{eq:g_inter} states that $g(W)$ is an intersection between the quartic $x~\mapsto~\frac{1}{4}(x^2~-~\alpha)^2$ (independent of $W$) and the line $x \mapsto \beta(W) +2\sqrt{\gamma(W)}x$. Specifically, it is the only intersection greater than $\sqrt{\alpha}$, due to the inequality
\[
g = \sqrt{x_1} + \sqrt{x_2} + \sqrt{x_3} \geq \sqrt{x_1 + x_2 + x_3} = \sqrt{\alpha}.
\]
Since $\beta_1 = 6 E \cdot \Re(Z) \ge 0$, the map $W \mapsto \beta(W)$ is nondecreasing. In addition, the map $W \mapsto \gamma(W)$ is also increasing for all $W \ge - \frac{\gamma_1}{2 \gamma_2}$.

\begin{lemma} \label{lem:bound_W_second_case}
    Under the condition $E > s/\sqrt{2}$, we have $W \ge - \frac{\gamma_1}{2 \gamma_2}$ over $\mathscr{C}$.
\end{lemma}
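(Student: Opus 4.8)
The plan is to reduce the claim to two facts already established in the excerpt: the explicit form of $\gamma = \gamma(W)$ computed just above, and the lower bound for $W = tuv$ over $\sC$ provided by Lemma~\ref{lem:maxtuv}. First I would read off $\gamma_1$ and $\gamma_2$: since $\tfrac{27}{2}Es^2$ is real, expanding $\gamma = \bigl|\tfrac{27}{2}Es^2 + W\,Z(\bk)\bigr|^2$ in powers of $W$ gives
\[
    \gamma_0 = \Bigl(\tfrac{27}{2}Es^2\Bigr)^2, \qquad \gamma_1 = 27\,Es^2\,\Re Z(\bk), \qquad \gamma_2 = |Z(\bk)|^2 .
\]
When $Z(\bk) = 0$ — which, as noted, occurs only at $\bk = \bnull$ — the coefficient $\gamma$ is constant in $W$ and there is nothing to prove; so I assume $Z(\bk)\neq 0$, hence $\gamma_2 > 0$, and $-\tfrac{\gamma_1}{2\gamma_2} = -\dfrac{27\,Es^2\,\Re Z(\bk)}{2\,|Z(\bk)|^2}$.

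The key input is the elementary bound $|Z(\bk)|^2 \le 6\,\Re Z(\bk)$. Indeed, for every $\bk$ the triangle inequality yields
\[
    |\,3 - Z(\bk)\,| = \bigl|\,\re^{-\ri\bk\cdot\bb_1} + \re^{-\ri\bk\cdot\bb_2} + \re^{\ri\bk\cdot(\bb_1+\bb_2)}\,\bigr| \le 3 ,
\]
which, after squaring, reads exactly $|Z(\bk)|^2 \le 6\,\Re Z(\bk)$ (and this inequality is sharp). In particular $\Re Z(\bk)\ge 0$, and since here $|Z(\bk)|^2 > 0$ we get $\Re Z(\bk) > 0$ and $\tfrac{\Re Z(\bk)}{|Z(\bk)|^2} \ge \tfrac16$. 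Plugging this into the formula above,
\[
    -\frac{\gamma_1}{2\gamma_2} = -\frac{27\,Es^2}{2}\cdot\frac{\Re Z(\bk)}{|Z(\bk)|^2} \le -\frac{27\,Es^2}{2}\cdot\frac16 = -\frac{9}{4}\,Es^2 ,
\]
so it suffices to prove $W \ge -\tfrac94 Es^2$ on $\sC$.

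To finish, I would use Lemma~\ref{lem:maxtuv}, which gives $W \ge (E - \sqrt2\,s)\bigl(E + \tfrac{s}{\sqrt2}\bigr)^2$ on $\sC$, together with the identity
\[
    (E - \sqrt2\,s)\Bigl(E + \tfrac{s}{\sqrt2}\Bigr)^2 + \frac94\,Es^2 = E^3 + \frac34\,Es^2 - \frac{s^3}{\sqrt2} .
\]
The right-hand side is nondecreasing in $E \ge 0$, so under the hypothesis $E > s/\sqrt2$ it is at least its value at $E = s/\sqrt2$, namely $\tfrac{s^3}{4\sqrt2} \ge 0$. Hence $W \ge -\tfrac94 Es^2 \ge -\tfrac{\gamma_1}{2\gamma_2}$, which is the assertion.

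The proof is a short chain of elementary inequalities, so there is no serious obstacle; the one point that needs a moment's thought is recognising that the right tool is the sharp estimate $|Z(\bk)|^2 \le 6\,\Re Z(\bk)$ (equivalently $|3 - Z(\bk)| \le 3$), and then checking that the constant $\tfrac94$ it produces dovetails exactly with the sharp lower bound for $W$ from Lemma~\ref{lem:maxtuv} precisely in the regime $E > s/\sqrt2$, equality being lost only at the excluded boundary $E = s/\sqrt2$. The remaining steps — extracting $\gamma_1,\gamma_2$ and verifying the cubic inequality of the last step — are routine algebra.
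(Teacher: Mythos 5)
Your proposal is correct and follows essentially the same strategy as the paper: bound $-\gamma_1/(2\gamma_2)$ above by $-\tfrac94 Es^2$ via the pointwise estimate $\Re Z(\bk)/|Z(\bk)|^2\ge \tfrac16$, then combine with the lower bound on $W$ from Lemma~\ref{lem:maxtuv} and the hypothesis $E>s/\sqrt2$ to get the same margin $\tfrac{1}{4\sqrt2}s^3\ge 0$. The only (cosmetic) differences are that you obtain $|Z(\bk)|^2\le 6\,\Re Z(\bk)$ from the triangle inequality $|3-Z(\bk)|\le 3$ rather than from the paper's explicit expansion $|Z|^2=6\Re Z-2\Re\widetilde Z$ with $\Re\widetilde Z\ge 0$ (the two are equivalent), and that you explicitly dispose of the degenerate point where $Z(\bk)=0$, i.e. $\gamma_2=0$, which the paper leaves implicit.
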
 
This implies that the intersection $g(W)$ is increasing in $W$ in the second case. Thus the maximum of $g$ over $\mathscr C$ is attained for the same parameters as the maximum of $W$ over $\mathscr{C}$, and we conclude with Lemma~\ref{lem:maxtuv}. 

\medskip

It remains to prove Lemma~\ref{lem:bound_W_second_case}.
\begin{proof}[Proof of Lemma~\ref{lem:bound_W_second_case}]
First, we compute that
\[
    \frac{-\gamma_1}{2 \gamma_2} = - \frac{\Re(Z(\bk))}{| Z |^2(\bk)} \frac{27}{2} E s^2.
\]
Let us compute a lower bound of the first fraction, valid for all $\bk$. We have
\begin{align*}
    |Z|^2(\textbf{k}) & = 12 - 6\cos (\bk \cdot \bb_1) -6\cos (\bk \cdot \bb_2) - \cos (\bk \cdot (\bb_1+\bb_2)) \\  
    & \quad + 2\cos (\bk \cdot (\bb_1 - \bb_2)) + 2\cos (\bk \cdot (2\bb_1 + \bb_2)) + 2\cos (\bk \cdot (\bb_1 + 2\bb_2)) \\
    & = 6\left[ 3 - \cos (\bk \cdot \bb_1) - \cos (\bk \cdot \bb_2) - 6\cos (\bk \cdot (\bb_1+\bb_2)) \right] \\ 
    & \quad  - 2\left[ 3 - \cos (\bk \cdot (\bb_1 - \bb_2)) -\cos (\bk \cdot (2\bb_1 + \bb_2)) -\cos (\bk \cdot (\bb_1 + 2\bb_2)) \right] \\
    & = 6 \Re (Z(\bk)) - 2 \Re(\widetilde{Z}(\bk)),
\end{align*}
where we set
\[
    \widetilde{Z}(\bk) := 3 - \re^{\ri \bk\cdot (\bb_1 - \bb_2)} -  \re^{- \ri \bk\cdot (2 \bb_1 + \bb_2)}  - \re^{- \ri \bk\cdot (\bb_1 + 2 \bb_2)}.
\]
We have $\Re(\widetilde{Z}(\bk)) \ge 0$, giving the bound
\[
    \frac{\Re(Z(\bk))}{| Z |^2(\bk)} = \frac{1}{6 - 2 \frac{\Re(\widetilde{Z}(\bk))}{\Re(Z(\bk))} } \ge \frac{1}{6},
    \quad \text{hence} \quad
    \frac{\gamma_1}{2 \gamma_2} \ge \frac{9}{4}E s^2.
\]
This bound is sharp because $\Re(\widetilde{Z}(\bk))$ vanishes at points where $\Re(Z(\bk))$ does not, specifically at $\pm (\bb_1^* + \bb_2^*)/3$. On the other hand, according to Lemma~\ref{lem:maxtuv}, we have
\[
    W \ge  (E - s\sqrt{2}) \left( E + \frac{s}{\sqrt{2}} \right) \left( E + \frac{s}{\sqrt{2}} \right) = E^3 -\frac32 Es^2 - \frac{1}{\sqrt{2}}s^3.
\]
Together with the condition $E > s/\sqrt{2}$, we obtain 
\begin{align*}
    W +  \frac{\gamma_1}{2 \gamma_2}
    & \geq \left(E^3 -\frac32 Es^2 - \frac{1}{\sqrt{2}}s^3\right) + \frac94 Es^2
     = E^3 - \frac{1}{\sqrt{2}}s^3 + \frac34Es^2 \\
    & \geq s^3 \left( \frac{1}{2\sqrt{2}} - \frac{1}{\sqrt{2}} + \frac{3}{4\sqrt{2}} \right) = \frac{1}{4\sqrt{2}}s^3 \geq 0.
\end{align*}
\end{proof}

This ends the proof of Theorem~\ref{thm:Ksym_long}, hence of Theorem~\ref{thm:Ksym_short}.
\end{proof}

%%%%%%%%%%%%%%%%%%%%%%%%%%%
\section{Translation-invariance for large $\mu$} \label{sec:trans_invar}

In this section, we prove the second part of Theorem~\ref{thm:phasetrans}, namely that for $\mu$ large enough, the optimizer is the pristine phase. Let us first compute the optimizer among all pristine phases, that is among all triplets of the form $(t,t,t)$. According to Lemma~\ref{thm:Ksym_long}, this optimizer $t_*$ is positive, so we assume $t \ge 0$ from now on.

\medskip

\subsection{Critical point among pristine configurations}

Henceforth, define the pristine Hamiltonian $H := T_2(t=1)$ from Section~\ref{ssec:pristine}. For any pristine phase, the energy per Carbon atom is
\[
\cE(t) =  - t \cdot \VTr \ | H | + \frac{\mu}{2} \cdot \frac{3}{2} (t-1)^2.
\]
We used here translation invariance by the $\LL_2$--lattice, and the fact that there are two carbon atoms in $\Gamma_2$, and three $t$--bonds. The expression is quadratic in $t$, hence the unique minimizer among pristine configurations is given by: 
\begin{equation} \label{eq:formula_t*}
\boxed{t_* = 1 + \frac{2}{3\mu}\VTr | H | = 1 + \frac{2}{3\mu} \fint_{B_2} m(\bk) \dee \bk,}
\end{equation}
where we recall that $m(\bk) := | 1 + \re^{\ri \bk \cdot \ba_1 } + \re^{\ri \bk \cdot \ba_2} |$, see Section~\ref{ssec:pristine}.

\subsection{Translation invariance for $\mu$ large enough} Take $L > 1$ an integer. We now prove that for $\mu$ large enough,  the optimal pristine configuration is in fact optimal among all configurations with $L$-periodicity with respect to the $\LL_2$ lattice (recall that the Kekulé texture is a subcase of $L=3$). In addition, our argument works {\em mutatis mutandis} for any quantum energy of the form $\VTr~f(T^2)$, with $f$ convex and $C^1$ over $\R^+$ (here $f(t) = - \sqrt{t}$).

Recall that for such $L$-periodic configurations $\textbf{t}$, the energy per Carbon atoms is
\[
    \cE(\textbf{t}) :=  - \VTr \ |T(\textbf{t})| + \frac{\mu}{2} \left( \frac{1}{2L^2}  \right) \sum_i (t_{i,1} - 1)^2 + (t_{i,2} - 1)^2 + (t_{i,3} - 1)^2,
\]
where $i$ numbers each of the 2-atom cells in a given $L\cdot \Gamma_2$ supercell, each 2-atom cell has three bonds with amplitudes $t_{i,1}, t_{i,2}, t_{i,3}$. 

\medskip

In what follows, we set
\[
    \bt =: \bt_* + \bh, \quad \varepsilon := \langle \bh \rangle = \frac{1}{3 L^2} \sum_{i,n} h_{i,n}
\]
where $\bt_*$ is the optimal pristine configuration. Here, $\varepsilon$ is the mean of $\bh$, so the mean of $\bt$ is $\langle \bt \rangle = t_* + \varepsilon$. Since $\bt \mapsto T(\bt)$ is linear in $\bt$, we have

\begin{align*}
\cE(\textbf{t}) 
& = - \VTr \left(  \sqrt{ \left[ T(\bt_*) + T(\bh) \right]^2} \right) + \frac{\mu}{2} \left( \frac{1}{2L^2}  \right) \sum_{i,l} (t_* + h_{i,l} - 1)^2 \\ 
& = - \VTr \left(  \sqrt{ \left[ T(\bt_*) + T(\bh) \right]^2} \right) + \frac{\mu}{2} \left[ \frac32(t_* - 1)^2 + 3 (t_* - 1) \varepsilon + \frac{1}{2L^2} \sum_{i,n} h_{i,n}^2 \right] \\ 
& = - \VTr \left( \sqrt{  t_*^2H^2 + t_* \left[ H T(\bh) + T(\bh) H \right] + T(\bh)^2 } \right) \\
& \quad \quad \quad + \frac{\mu}{2} \frac32(t_* - 1)^2 +  \varepsilon \VTr |H| + \frac{\mu}{2}\frac{1}{2L^2} \sum_{i,n} h_{i,n}^2~~, 
\end{align*}
where in the last line the $\varepsilon \VTr |H|$ term appears by recalling the expression of $t_*$ at \eqref{eq:formula_t*}, and furthermore the $(t_*-1)^2$ term is the pristine elastic energy. 

\medskip

The main idea of the proof is to use again the (strict) convexity of $f(t) := - \sqrt{t}$, whose derivative is $f'(t) = - \frac{1}{2 \sqrt{t}}$, so that $A \mapsto \VTr ( f(A))$ is convex as well. For any self-adjoint $B$ and positive $A$, we have
\[
    -\VTr\sqrt{A+B}\geq-\VTr\sqrt{A}- \VTr \left( \frac{1}{2\sqrt{A}}B \right).
\]

We apply the previous inequality with 
\[
    A = t_*^2 H^2, \quad \text{and} \quad
    B = t_* \left( H T(\bh) + T(\bh) H \right) + T(\bh)^2. 
\]
Note that $H$ is not invertible. However, we know that the band structure of $|H|$ is described by the function $m(\bk) = |1+\e^{\ri\textbf{k}\cdot \textbf{a}_1}+\e^{\ri\textbf{k}\cdot \textbf{a}_2}|$ over $B_2$, which is zero only at the Dirac points, where it is conic: $m(\textbf{K}+\bk) \sim_{\bk\to 0} \sqrt{3}/2|\bk|$. Hence its inverse is integrable over the 2-dimensional Brillouin zone, meaning that $\VTr \frac{1}{|H|}B$ is finite for any bounded operator $B$, and the above inequality still holds. Computing the inequality yields
\begin{equation}\label{ineq:Trconv}
- \VTr | T(\bt) | \ge - t_* \VTr | H | - \VTr \left( {\rm sgn}(H) T(\bh) \right) -  \frac{1}{2 t_*} \VTr \left( \frac{1}{| H |} T(\bh)^2 \right).
\end{equation}

The strict convexity implies that this inequality is only attained when $T(\textbf{t})^2$ coincides with $ (t_*H)^2$. This means that all double hops on the honeycomb lattice have amplitude $t_*^2$, and since each site has three neighbours, we can conclude all amplitudes are actually equal, with either $\textbf{t} = \textbf{t}_*$ or $\textbf{t} = -\textbf{t}_*$. 

\medskip

Let us now compute the middle term in \eqref{ineq:Trconv}. By invariance with respect to the $2$-atom cell translations and $\pi/3$ rotations, and by linearity of $\bh \mapsto T(\bh)$, we have
\[
    \VTr \left( {\rm sgn}(H) T(\bh) \right) = \VTr \left( {\rm sgn}(H) T(\langle \bh \rangle) \right) = \varepsilon \VTr  \left( {\rm sgn}(H) H \right) = \varepsilon \, \VTr \, | H |.
\]
Applying our previous computations of $\cE$, we deduce that 
\begin{equation} \label{ineq:en_pre_kag}
\cE(\bt) \ge \cE(\bt_*) + \left( \frac{\mu}{2}\frac{1}{2 L^2} \sum_{i,n} h_{i,n}^2   -  \frac{1}{2 t_*} \VTr \left( \frac{1}{ | H |} T(\bh)^2 \right) \right).
\end{equation}

Note that having cancelled the linear term in $\varepsilon$ reflects the fact that $t_*$ is the optimizer among pristine configurations. To continue the computation, we write the double hop operator $T(\bh)^2$ as
\[
    T(\bh)^2 = \sum_{i,n} h_{i,n}^2 I_i + \sum_{\ps{(i,n)}{(j,m)}} h_{i,n} h_{j,m} J_{(i,n),(j,m)}
\]
where $I_i$ describes double hops from cell $i$ back to $i$, and $J_{(i,n),(j,m)}$ the double hop along adjacent bonds $(i,n)$ and $(j,m)$. There are $6L^2$ such terms (6 double hops per 2-cell). 

\medskip

Again, invariance by translations and rotations of the model implies the two following identities. First, since $\sum_i I_i = I$, we have
\[
    \VTr \frac{1}{|H|} I_i = \frac{1}{L^2} \VTr \frac{1}{|H|}.
\]
Second, computing $H^2 = 3 I + \sum_{\ps{(i,n)}{(j,m)}} J_{(i,n),(j,m)}$ yields
\[
    \VTr \frac{1}{|H|} H^2 = 3 \VTr \frac{1}{|H|} + 6L^2 \ \VTr \frac{1}{|H|} J_{(i,n),(j,m)}
\]
hence
\[
\VTr \frac{1}{|H|} J_{(i,n),(j,m)} = \frac{1}{6L^2} \left(\VTr |H| -  3 \VTr \frac{1}{|H|} \right),
\]
independently of $\langle (i,n), (j,m)\rangle$. This gives
\medskip
\begin{align*}
    \VTr \left( \frac{1}{|H|} T(\bh)^2  \right)
    = & ~ \VTr \frac{1}{|H|} \cdot \frac{1}{L^2} \sum_{i,n} (h_{i,n})^2 \\
    & \quad - \left( 3 \VTr \frac{1}{|H|} - \VTr |H|  \right) \cdot \frac{1}{6L^2} \sum_{\ps{(i,n)}{(j,m)}} h_{i,n} h_{j,m}  .
\end{align*}

Note that $3 \VTr \frac{1}{|H|} - \VTr |H|$ is positive, by the following Cauchy-Schwarz inequalities, using $\VTr (H^2) = 3 $:
\[
1 = \VTr (I) < \VTr |H| \cdot \VTr \frac{1}{|H|} < \VTr (H^2) \frac{1}{\VTr |H|} \cdot \VTr \frac{1}{|H|} = 3 \frac{1}{\VTr |H|} \cdot \VTr \frac{1}{|H|}. 
\]

To conclude, we claim the following.

\begin{lemma}
    We have
    \[
        \sum_{\ps{(i,n)}{(j,m)}} h_{i,n}h_{j,m} \geq -\sum_{i,n} h_{i,n}^2 
    \]
\end{lemma}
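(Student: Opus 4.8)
The plan is to localise the inequality at each carbon site. The left-hand sum runs over unordered pairs of bonds sharing an endpoint, and since two distinct bonds of the honeycomb share at most one endpoint, every such pair meets at exactly one vertex. Hence one may regroup
\[
    \sum_{\ps{(i,n)}{(j,m)}} h_{i,n} h_{j,m} = \sum_{x} \ \sum_{\{e,e'\}\ni x} h_e h_{e'},
\]
where $x$ ranges over the $2L^2$ carbon atoms in the supercell and, for fixed $x$, the inner sum runs over the three unordered pairs $\{e,e'\}$ of distinct bonds incident to $x$ (there are $\binom{3}{2}=3$ of them, since each atom has three neighbours). This produces $6L^2 = 3 \cdot 2L^2$ terms, matching the count in the text, and is the only combinatorial input.

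Next I would use the elementary identity, valid for all reals $a,b,c$,
\[
    ab + bc + ca = \tfrac12\big( (a+b+c)^2 - (a^2+b^2+c^2) \big) \ \geq\ -\tfrac12 (a^2 + b^2 + c^2),
\]
with $a,b,c$ the three bond amplitudes $h_e$ at a fixed vertex $x$. Summing the resulting inequalities over all $x$ gives
\[
    \sum_{\ps{(i,n)}{(j,m)}} h_{i,n} h_{j,m} \ \geq\ -\tfrac12 \sum_{x}\ \sum_{e \ni x} h_e^2 .
\]

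To conclude, each bond has exactly two endpoints, so it is counted twice in $\sum_x \sum_{e\ni x} h_e^2$; therefore $\sum_x \sum_{e\ni x} h_e^2 = 2 \sum_{i,n} h_{i,n}^2$, and the claimed bound $\sum_{\ps{(i,n)}{(j,m)}} h_{i,n}h_{j,m} \geq - \sum_{i,n} h_{i,n}^2$ follows. There is no genuine difficulty here; the only place demanding care is the bookkeeping in the first step — confirming that the adjacency sum decomposes over vertices with each pair counted exactly once (which uses that the honeycomb is a simple graph), and then tracking the factor $2$ coming from each bond having two endpoints. Equivalently, one may note that at each vertex the quadratic form $(h_e)_{e\ni x}\mapsto \sum_{\{e,e'\}} h_e h_{e'}$ on $\R^3$ has smallest eigenvalue $-\tfrac12$, but the direct summation above is the cleanest route.
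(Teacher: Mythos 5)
Your proof is correct, but it follows a genuinely different route from the paper's. The paper identifies the left-hand side with $\tfrac12\ps{\bh}{A\bh}$, where $A$ is the adjacency matrix of the Kagome lattice (the line graph of the periodized honeycomb), and then invokes the Kagome band structure --- in particular the flat band at $-2$, computed by Fourier transform in Appendix~\ref{appendix:kagome} --- to conclude. You instead localize at the vertices: each unordered pair of adjacent bonds meets at exactly one carbon atom (the periodized honeycomb is a simple graph for $L>1$), at each atom the three incident values $a,b,c$ satisfy $ab+bc+ca=\tfrac12\bigl((a+b+c)^2-(a^2+b^2+c^2)\bigr)\ge -\tfrac12(a^2+b^2+c^2)$, and summing over atoms counts each bond twice, which is exactly the factor needed for the stated bound; your bookkeeping ($3$ pairs per atom, $2L^2$ atoms, hence $6L^2$ pairs) agrees with the paper's count. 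Your argument is in effect the incidence-matrix proof that any line-graph adjacency operator is bounded below by $-2$ (write $A=B^{\ast}B-2I$ with $B$ the bond--atom incidence matrix, so that $\tfrac12\ps{\bh}{A\bh}=\tfrac12\sum_x\bigl(\sum_{e\ni x}h_e\bigr)^2-\sum_e h_e^2$), so it is more elementary and more general: no Fourier analysis is needed, and nothing about the honeycomb is used beyond simplicity. What the paper's route buys in exchange is the explicit Kagome spectrum, which in addition shows that the constant $-2$ is attained (the flat band); this sharpness is not needed for the lemma itself, though it is relevant to the quality of the resulting bound on $\mu_c'$.
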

 
This is essentially a generalisation of the bound used in the proof of Lemma~\ref{lem:spectrum_T}. 
 
\begin{proof}
    We introduce the \emph{line graph} for our periodic honeycomb sample: this line graph is a Kagome (or trihexagonal) lattice. Denoting by $A$ the corresponding Kagome adjacency matrix, we have
    \[
        \sum_{\ps{(i,n)}{(j,m)}} h_{i,n}h_{j,m} = \frac12 \ps{\bh}{A\bh},
    \]
    where the 1/2 factor accounts for each double hop being counted twice. It is well-known for the Kagome lattice that $A$'s lowest eigenvalue is a flat band with value $-2$, see for instance~\cite{Mie-91} (we include the computation in Appendix~\ref{appendix:kagome} for completeness), and the result follows.
\end{proof}

\medskip

Setting $\delta^2 :=  \langle \bh^2 \rangle :=  \frac{1}{3 L^2} \sum_{i,n} h_{i,n}^2$, we can continue the lower bound from \eqref{ineq:en_pre_kag}:
\begin{align*}
     \cE(\bt) 
     & \ge \cE(\bt_*)   + \frac{\mu}{2}\frac{3}{2} \delta^2   -  \frac{1}{2 t_*} \left[ 3\delta^2 \VTr  \frac{1}{ | H |} + \frac12 \delta^2 \left(3 \VTr \frac{1}{|H|} - \VTr |H| \right)   \right]  \\
     & \ge \cE(\bt_*) + \frac{3\delta^2}{2t_*} \left[ \frac{\mu}{2} t_* - \frac32\VTr \frac{1}{|H|} + \frac16 \VTr |H| \right] \\
     & \ge \cE(\bt_*) + \frac{3\delta^2}{4t_*} \left[ \mu+ \VTr |H|  - 3\VTr \frac{1}{|H|} \right].
\end{align*}
In the last line, we used the expression of $t_*$ in~\eqref{eq:formula_t*}. We conclude that for $\mu$ large enough, the term in brackets is positive, so the undistorted configuration $\textbf{t}_*$ is the unique minimizer of $\mce$ in the space of $L$-periodic configurations, for any $L > 1$. Uniqueness is guaranteed simply by the equality case in the convexity argument at \eqref{ineq:Trconv}, which is $\textbf{t} = \textbf{t}_*$ or $\textbf{t} = -\textbf{t}_*$, and we know that $\cE( \textbf{t}_* ) < \cE( -\textbf{t}_* )  $. 

\medskip

If we denote by $\mu_c'$ the largest critical $\mu$ above which the pristine case is optimal, we find an upper bound on $\mu_c'$ of $3\VTr \frac{1}{|H|} -  \VTr |H| $, which numerically yields
\[
\boxed{\mu_c' \leq \fint_{B_2} \left( \frac{3}{m(\textbf{k})} - m(\textbf{k}) \right)\dee\textbf{k} \simeq 1.114. } 
\]

%%%%%%%%%%%%%%%%%%%%%%%%%%%%%

\section{Loss of translation-invariance for small $\mu$}

We now show by a perturbation argument that for a sufficiently small parameter $\mu$, minimizers of $\cE$ are \textit{not} of the form $(t,t,t)$, i.e. there must be a breaking of translation-invariance over the honeycomb lattice by way of a Kekulé distortion. For this proof, we compute the Hessian of $\cE$ around the critical point $(t_*, t_*, t_*)$, and prove that for $\mu$ small enough, this Hessian is not positive.

\medskip

First, we recall that $T(t_*, t_*, t_*) = t_* H$ with $H = T(1,1,1)$. The spectrum of $H$ has been studied in Section~\ref{ssec:pristine} using the $2$-atom unit cell. However, in this section, we rather use its spectral properties on the $6$-atom unit cell. By band folding, the $6$ eigenvalues of $H(\bk)$ for $ \bk \in B_6$ are given by
\begin{align*}
    \lambda_1(\bk) = m(\bk), \quad
    \lambda_2(\bk) = m(\bk+\mathbf{b_1^*}), \quad
    \lambda_3(\bk) = m(\bk+\mathbf{b_2^*}), \\
    \lambda_4(\bk) = -\lambda_1(\bk), \quad 
    \lambda_5(\bk) = -\lambda_2(\bk), \quad 
    \lambda_6(\bk) = -\lambda_3(\bk).
\end{align*}

Recall that $m(\bk) = |1 + \re^{\ri \bk\cdot\ba_1}+ \re^{\ri \bk\cdot\ba_2}|$, and set the phase 
\[
\e^{i\theta(\bk)} = (1 + \re^{\ri \bk\cdot\ba_1}+ \re^{\ri \bk\cdot\ba_2})/m(\bk).
\]
The corresponding eigenvectors and eigenprojectors, noted $x_i(\bk)$ and $P_i(\bk)$, are naturally related to the Fourier modes. By band folding, and keeping track of phases in $B_6$ (see also our convention in Figure~\ref{fig:honeycomb1}), we find their expression given by 
\begin{equation*}
     x_1(\bk) = \frac1{\sqrt{6}}\begin{pmatrix}
        1, & 
        \e^{\ri\bk\cdot \ba_1},& 
        \e^{\ri\bk\cdot( \ba_1-\ba_2)},& 
        \e^{\ri\theta(\bk)},& 
        \e^{\ri\bk\cdot(\ba_1-\ba_2)+i\theta(\bk)},& 
        \e^{-\ri\bk\cdot\ba_2+i\theta(\bk)}
    \end{pmatrix}^T, 
\end{equation*}
then $x_2(\bk) = x_1(\bk+\mathbf{b_1^*})$ and $x_3(\bk) = x_1(\bk+\mathbf{b_2^*})$. The expressions of $x_4, x_5, x_6$ are identical by replacing phases $\e^{\ri\theta}$ by $-\e^{\ri\theta}$, as a result of the similarity $T \simeq -T$. We now prove the first point in Theorem~\ref{thm:phasetrans} by the second-order energy expansion detailed below. 

\begin{proposition}
    Set
    \[
        c(\bk) := 
            \frac{m(\bk + \mathbf{b_2^*})^2 \Theta_1(\bk)}{m(\bk) + m(\bk+\mathbf{b_1^*})} + \frac{m(\bk + \mathbf{b_1^*})^2 \Theta_2(\bk)}{m(\bk) + m(\bk+\mathbf{b_2^*})}
        + \frac{m(\bk)^2 \Theta_3(\bk)}{m(\bk + \mathbf{b_1^*}) + m(\bk+\mathbf{b_2^*})}.
    \]
    where $\Theta_1(\bk) = |e^{\ri\theta(\bk)}e^{-\ri\theta(\bk + \mathbf{b_2^*})} - e^{\ri\theta(\bk + \mathbf{b_2^*})}e^{-\ri\theta(\bk + \mathbf{b_1^*})}|^2$, and $\Theta_2$, $\Theta_3$ are obtained by cycling of indices. Then we have, for all $t > 0$,
    \begin{equation}\label{eq:energy_taylor}
    \cE(t + 2h, t - h, t - h) = \cE(t, t, t) + h^2 \left( 9 \mu - \frac{1}{t} \fint_{B_6} c(\bk) \rd \bk \right) + o( h^2).
    \end{equation}
 
    Considering in particular the case $t = t_*$, we obtain that for all $\mu < \mu_c$ there is $h > 0$ so that $\cE(t_*+ 2h, t_* -h, t_*-h) < \cE(t_*, t_*, t_*)$ , with
    \[
        \boxed{\mu_c := \frac19 \fint_{B_6} c(\bk) \rd \bk - \frac23 \fint_{B_2} m(\bk) \rd \bk.}
    \]
\end{proposition}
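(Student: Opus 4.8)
The plan is to compute the second-order Taylor expansion of $\cE(t+2h, t-h, t-h)$ in the variable $h$ around $h=0$. The distortion term is elementary: the triple $(t+2h, t-h, t-h)$ has the same mean $t$ as $(t,t,t)$ (so the linear term vanishes, consistently with the fact that the perturbation is mean-free), and $\frac{\mu}{2}\cdot\frac12\big((t+2h-1)^2 + 2(t-h-1)^2\big) = \frac{\mu}{2}\cdot\frac12\big(3(t-1)^2 + 6h^2\big)$, contributing exactly $\frac{\mu}{2}\cdot 3h^2 = \tfrac32\mu h^2$ to the quadratic coefficient. Wait — one must be careful with the normalisation: with the $1/2$ per-bond factor the coefficient is $\frac{\mu}{2}\cdot\frac12\cdot 6 h^2 = \tfrac32 \mu h^2$; the stated coefficient $9\mu$ must then come from a different normalisation convention for the $(t+2h,\dots)$ parametrisation, and I would track this bookkeeping carefully at the start.

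The substantive part is expanding $-\VTr|T(t+2h, t-h, t-h)|$. Writing $T(t+2h,t-h,t-h) = t H + h\, T(2,-1,-1)$ (using linearity of $\bt\mapsto T(\bt)$), I set $V := T(2,-1,-1)$ and expand $-\VTr\sqrt{(tH + hV)^2}$ in $h$ fibrewise over $\bk\in B_6$. At each $\bk$, $tH(\bk)$ is a Hermitian $6\times 6$ matrix with the known eigenvalues $\pm t\,m(\bk), \pm t\,m(\bk+\mathbf{b_1^*}), \pm t\,m(\bk+\mathbf{b_2^*})$ and eigenprojectors built from $x_1,\dots,x_6$ as listed. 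Standard second-order perturbation theory for the function $\lambda\mapsto|\lambda|$ applied to the Hermitian family $tH(\bk) + h V(\bk)$ gives: the first-order term is $\sum_i \sgn(\lambda_i)\langle x_i, V x_i\rangle$, which integrates to zero because it equals $\varepsilon\,\VTr(\sgn(H)H)$ with $\varepsilon=0$ (the perturbation has zero mean); and the second-order term is a sum over pairs $(i,j)$ of $\frac{|\langle x_i, V x_j\rangle|^2}{\text{(energy denominator)}}$ weighted by the convexity of $|\cdot|$. The key point is that pairs $(i,j)$ with $\lambda_i, \lambda_j$ of the \emph{same} sign contribute with a denominator $|\lambda_i| - |\lambda_j|$ but a vanishing numerator structure (since $|\cdot|$ is linear away from $0$, these contributions cancel between the $+$ and $-$ copies), so only the cross terms between a positive eigenvalue $t\,m(\bk+\cdot)$ and a negative one $-t\,m(\bk+\cdot)$ survive, producing denominators of the form $t(m(\bk)+m(\bk+\mathbf{b_1^*}))$ etc. Computing $|\langle x_i, V x_j\rangle|^2$ explicitly from the Fourier-mode expressions for the $x_i$ yields the factors $m(\bk+\mathbf{b_2^*})^2\,\Theta_1(\bk)$ and its cyclic images, where the $\Theta$'s are exactly the stated phase differences $|e^{\ri\theta(\bk)}e^{-\ri\theta(\bk+\mathbf{b_2^*})} - e^{\ri\theta(\bk+\mathbf{b_2^*})}e^{-\ri\theta(\bk+\mathbf{b_1^*})}|^2$. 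Assembling, the quadratic coefficient of $-\VTr|T|$ is $-\frac{1}{t}\fint_{B_6} c(\bk)\,\rd\bk$, up to the constants in $c$.

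Having established \eqref{eq:energy_taylor}, the final claim is immediate: the quadratic coefficient $9\mu - \frac1t\fint_{B_6} c(\bk)\rd\bk$ is negative precisely when $\mu < \frac{1}{9t}\fint_{B_6} c$, and specialising to $t = t_*$ and using formula \eqref{eq:formula_t*}, $t_* = 1 + \frac{2}{3\mu}\fint_{B_2} m(\bk)\rd\bk$, one checks that $\mu < \frac{1}{9 t_*}\fint_{B_6} c$ is equivalent — after clearing the $t_*$ denominator and substituting — to $\mu < \frac19\fint_{B_6} c(\bk)\rd\bk - \frac23\fint_{B_2} m(\bk)\rd\bk =: \mu_c$; for such $\mu$ the coefficient is strictly negative, so $\cE(t_* + 2h, t_*-h, t_*-h) < \cE(t_*,t_*,t_*)$ for small $h>0$, and since $(t_*,t_*,t_*)$ is the best pristine configuration this rules out pristine minimizers, proving point (i) of Theorem~\ref{thm:phasetrans} with this explicit $\mu_c$.

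I expect the main obstacle to be the degenerate second-order perturbation theory: the eigenvalue $0$ is not an eigenvalue of $H(\bk)$ for generic $\bk\in B_6$ (the Dirac points are folded to $\bk=\bnull$), but $H(\bk)$ has eigenvalues in $\pm$-pairs that are generically simple, so one must verify that $|\cdot|$ is smooth enough along the family and that the measure-zero set $\{\bk : m(\bk)=0\}$ (where the band touches $0$ and $|\cdot|$ is non-differentiable) does not obstruct the integration — this is handled exactly as in the paper's earlier integrability remark, since $1/m$ is integrable in 2D near a conical point. A secondary, purely computational obstacle is correctly extracting the phase factors $\Theta_i$ from the inner products $\langle x_i, V x_j\rangle$ while keeping the $B_6$-folding bookkeeping (which Fourier mode sits in which component) consistent with the convention fixed in Figure~\ref{fig:honeycomb1}.
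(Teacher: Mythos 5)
Your outline follows the same route as the paper: fibrewise second-order perturbation of $\Tr\,|\cdot|$, observing that only cross terms between positive and negative eigenvalues survive (divided differences of $\mathrm{sgn}$ vanish for same-sign pairs), computing the matrix elements $\langle x_i, V x_j\rangle$ from the folded eigenvectors to get the $\Theta_i$'s, and then reading off $\mu_c$ from $9\mu t_* = \fint_{B_6} c$ together with \eqref{eq:formula_t*}. Your worry about the elastic coefficient ($\tfrac32\mu h^2$ versus $9\mu h^2$) is only a normalisation/bookkeeping issue: the quantum coefficient carries the same overall factor, so the ratio, and hence the boxed $\mu_c$, is unaffected.

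The genuine gap is the justification of the $o(h^2)$ remainder \emph{after integration over} $B_6$. You dismiss the Dirac-point issue by saying it is ``handled exactly as in the paper's earlier integrability remark, since $1/m$ is integrable in 2D near a conical point'', but that remark only guarantees that the second-order coefficient $c(\bk)$ is integrable; it says nothing about the error term. The pointwise expansion at fixed $\bk\neq\bnull$ is not uniform in $\bk$: near the folded Dirac point the relevant gap closes like $|\bk|$, naive perturbation theory breaks down when $|\bk|\lesssim h$, and the third-order term behaves like $h^3/|\bk|^2$, so one cannot simply integrate the pointwise $o(h^2)$ statements. This is precisely where the paper spends its analytic effort: it writes the remainder $R(\eta,\bk)$ as a contour integral of $z\bigl(\tfrac{1}{z-H}S_1\bigr)^3\tfrac{1}{z-H-\eta S_1}$, uses the conical lower bound $\lambda_0(\bk)\ge C|\bk|$ for $H(\bk)$ \emph{and} the spectral-gap bound $\lambda_0^\eta(\bk)\ge|\eta|$ for the perturbed operator (Lemma~\ref{lem:spectrum_T} applied to $(1+\eta,1,1)$), obtaining a bound of the form $|R(\eta,\bk)|\lesssim 1/|\bk|$ uniformly in small $\eta$, and concludes by dominated convergence that $\fint_{B_6}R(\eta,\bk)\,\rd\bk\to0$. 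Without an argument of this type (or some equivalent uniform control of the remainder near $\bk=\bnull$), the passage from the fibrewise expansion to \eqref{eq:energy_taylor} is not established, and this is the main missing step in your proposal; the rest (symmetry cancellations, explicit $\Theta_i$, and the deduction of $\mu_c$ at $t=t_*$) matches the paper's proof.
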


The rest of the section is the proof of this Proposition.

\subsection{Eigenvalue expansion of $T(1,1,1, \bk)$}

We consider the quantum kinetic energy. First, we note that
\begin{align*}
    T(t + 2h, t - h, t - h) & = (t - h) T(1 + \eta, 1, 1), \quad \text{with} \quad \eta =  \frac{3h}{t - h} \\
    & = (t - h) \left[ H + \eta S_1 \right],
\end{align*}
where $S_1$ is the adjacency matrix corresponding to the $t$--bonds only. As seen previously, this operator commutes with $\LL_6$--translations, hence can be studied with the $\LL_6$ Fourier transform. Note that
\[
    S_1(\bk) = \begin{pmatrix}
        0 & I_3 \\
        I_3 & 0
    \end{pmatrix},
\]
is independent of $\bk \in B_6$. In the following expansion, we take $\bk \neq 0$, by knowledge of function $m$, this ensures that $H(\bk)$ has no zero eigenvalues. Similarly, assuming $|\eta| < 1$,  the band gap seen in Lemma~\ref{lem:spectrum_T} shows that $|T(1+\eta,1,1)| \geq |\eta|$ (since the standard deviation of $(1+\eta, 1,1)$ is $|\eta| \sqrt{2}/3$), thus $H(\bk) + \eta S_1$ has no zero eigenvalues. 
\medskip

Recall that since $H(\bk)$ and $S_1$ have even spectra,  
\[
    \frac12 \Tr | H(\bk) + \eta S_1 | = \Tr \left[ H(\bk) + \eta S_1  \right]_+ = \frac{1}{2 i \pi} \int_\sC \frac{z}{z - H(\bk) - \eta S_1} \rd z,
\]  
where we take the contour $\sC$ to be a rectangle in $\mathbb{C}$ with corners $iM, (1+i)M, (1-i)M, -M$ for very large fixed $M$, so that $\sC$ encloses the positive eigenvalues that interest us, with no eigenvalues on $\sC$ itself since $\bk \neq 0$. The resolvent formula then shows that

\begin{align*}
   \frac12 \Tr | H(\bk) + \eta S_1 |
    = & \frac12 \Tr | H (\bk) | +  \eta \Tr (\mathbf{1}_{H(\bk) \geq 0} S_1) \\
    & + \eta^2\sum_{1 \leq i \leq 3, 4 \leq j \leq 6} \frac{1}{\lambda_i(\bk) - \lambda_j(\bk)} \Tr \left[ P_i (\bk) S_1 P_j(\bk) S_1 \right]   + \eta^2 R(\eta, \bk) 
\end{align*} 
with the remainder term
\[
R(\eta, \bk) := \eta \Tr \left( \frac{1}{2\pi i} \int_\sC z\left(\frac{1}{z-H} S_1 \right)^3 \frac{1}{z-H - \eta S_1} \rd z \right).
\]
For fixed $\mathbf{k} \neq 0$, we see that $R(\eta, \bk) \to 0$ as $\eta \to 0$, again because the eigenvalues of $H+\eta S_1$ stay within $\sC$ for all small $\eta$, thus the contour integral in the expression of $R$ is finite at $\eta = 0$.\medskip

\noindent The first-order term in the expansion computes as
\[
    \Tr (\mathbf{1}_{H(\bk) \geq 0} S_1) = \frac13 \Tr H(\bk)_+. 
\]
For the second-order term, a computation, detailed in Appendix~\ref{appendix:order2term} shows that
\[
    \sum_{1 \leq i \leq 3, 4 \leq j \leq 6} \frac{1}{\lambda_i(\bk) - \lambda_j(\bk)} \Tr (P_i(\bk) S_1 P_j(\bk) S_1) = \frac{1}{18} c(\bk).
\]

Notice in the expression of $c$ that the only term that diverges at $\mathbf{k} = 0$ is that with denominator $m(\mathbf{k} + \mathbf{b_1^*}) + m(\mathbf{k}+\mathbf{b_2^*})$. Since asymptotically the Dirac cones $1/(m(\mathbf{k} + \mathbf{b_1^*})+ m(\mathbf{k}+\mathbf{b_2^*}))$ are dominated by $1/|\mathbf{k}|$, we see that $c$ is integrable over the two-dimensional Brillouin zone $B_6$. 

\subsection{Second order expansion of the energy}

Let us prove that $R(\eta, \bk)$ is integrable in $\bk$, and that its integral goes to $0$ as $\eta \to 0$. Let $\gamma : [0,1] \to \C$ be the contour map, then using the fact that $S_1$ is bounded, we find
\[ |R(\eta, \bk)| \lesssim \int_0^1 \frac{|\eta| |\gamma(t)|}{\dist \left[\gamma(t), \sigma(H(\bk))\right]^{3}} ~~\frac{1}{ \dist \left[\gamma(t), \sigma(H(\bk) + \eta S_1)\right] } \dee t .\] 

Take $\sC_1$ to be the segment of the contour between $-iM$ and $iM$, and $\sC_2$ the rest of the contour. For the part with $\sC_2$, we can simply bound, for all $|\eta| \le 1$,
\[
    \forall z \in \sC_2, \quad  \dist \left[z, \sigma(H(\bk))\right] \geq M
    \quad \text{and} \quad
    \dist \left[ z, \sigma(H(\bk) + \eta S_1) \right] \geq M,
\]
So the part on $\sC_2$ is uniformly bounded in $\bk$ and $\eta$, hence integrable in $\bk \in B_6$. This integral goes to $0$ as $\eta \to 0$ thanks to the $\eta$ factor in the numerator. We now bound the part on $\sC_1$. We write $z \in \sC_1$ as $z = iw$ with $-M \leq w \leq M$. Then
\[
    \dist \left[z, \sigma(H(\bk))\right] = \sqrt{w^2 + \lambda_0(\bk)^2}
    \quad \text{and} \quad
    \dist \left[z, \sigma(H(\bk) + \eta S_1)\right] =  \sqrt{w^2 + \lambda_0^\eta(\bk)^2},
\] 
where $\lambda_0(\bk)$ and $\lambda_0^\eta(\bk)$ are the lowest positive eigenvalues of $H(\bk)$ and $H(\bk) + \eta S_1$ respectively ($\lambda_0(\textbf{k})$ is either $m(\bk+\mathbf{b_1^*})$ or $m(\bk+\mathbf{b_2^*})$). 

\medskip

We have already obtained lower bounds on these eigenvalues: by virtue of the Dirac cones, we have $\lambda_0(\bk) \geq C|\bk|$, and we proved in Lemma~\ref{lem:spectrum_T} and $\lambda_0^\eta(\bk) \geq |\eta|$ by the aforementioned bandgap. Thus we have the lower bounds: 
\[
    \dist (z, \sigma(H(\bk))) \geq \sqrt{w^2 + C|\bk|^2} 
    \quad \text{and} \quad
    \dist (z, \sigma(H(\bk) + \eta S_1)) \geq |\eta|.
\]
The part on $\sC_1$ is then bounded by
\begin{equation*}
    \int_{-M}^M \frac{|w|}{ (w^2 + C|\bk|^2)^{3/2} } \frac{|\eta|}{|\eta|}\dee w
    \le \int_{-\infty}^\infty \frac{|w|}{ (w^2 + C|\bk|^2)^{3/2} }  \dee w
    =  \frac{2}{\sqrt{C} | \bk |},
\end{equation*}
which is integrable in $\bk$ on the 2-dimensional Brillouin-zone $B_6$. So $| R(\eta, \bk) |$ is bounded by an integrable function, independent of $\eta$. In addition, we have the pointwise convergence $R(\eta, \bk)  \to 0$ as $\eta \to 0$. The dominated convergence theorem then implies that $\int_{B_6} R(\eta, \bk) \rd \bk \to 0$ as $\eta \to 0$.

\medskip

Altogether, we proved that the quantum term of the perturbed energy has the expansion 
\[
     \VTr | H + \eta S_1 |
    = \VTr | H  | +  \eta \frac13 \VTr | H | + 2 \eta^2 \frac{1}{18} \fint_{B_6} c(\bk) \rd \bk + o(\eta^2).
\]
When rewritten in $t$ and $h$ using the linearity of $T$, the first-order term disappears (this is because of symmetry in the $t,u,v$ variables), and we get
\[
 \VTr~ |T(t+2h, t-h, t-h)| = \VTr~ |T(t,t,t) | + \frac{h^2}{t} \fint_{B_6} c(\bk) \rd \bk + o(h^2)
\]
On the other hand, for the distortion energy, we have
\[
    \frac{\mu}{2} \left[ (t + 2 h - 1)^2 + 2 (t - h - 1)^2 \right] = \frac{\mu}{2} 3 (t - 1)^2 + 9 \mu h^2
\]
Altogether, we end up with
\[
    \cE(t+2h,t-h,t-h) = \cE(t,t,t) + h^2\left(9\mu - \frac1t \fint_{B_6} c(\bk) \dee \bk \right) + o(h^2),
\]
and the result follows. Thus the Kekulé perturbation lowers the energy when the coefficient $9\mu - \frac1t \fint_{B_6} c(\mathbf{k}) \dee \mathbf{k}$ is negative. We apply this in the case of the optimal pristine configuration $ t_* = 1 + \frac{2}{3\mu}\VTr |H|$, yielding an expression of $\mu_c$:

\begin{equation*}
\boxed{\mu_c = \frac{1}{9}\fint_{B_6} c(\mathbf{k})\dee \mathbf{k} - \frac23 \fint_{B_2}  m(\mathbf{k}) \dee \mathbf{k}.}
\end{equation*}

Numerically, this constant has a value of approximately $0.888$. Since this value follows from our Hessian computation, we expect it to be the true phase transition threshold for the Peierls energy, i.e. the local behaviour around the optimal pristine configuration should be indicative of the global picture.

\underline{Acknowledgements}: we collectively thank Rupert Frank and Michael Loss for stimulating discussions related to this work.

%%%%%%%%%%%%%%%%%%%%%%

\appendix

\section{Appendix}

\begin{figure}
    \centering
    \includegraphics[scale=0.3]{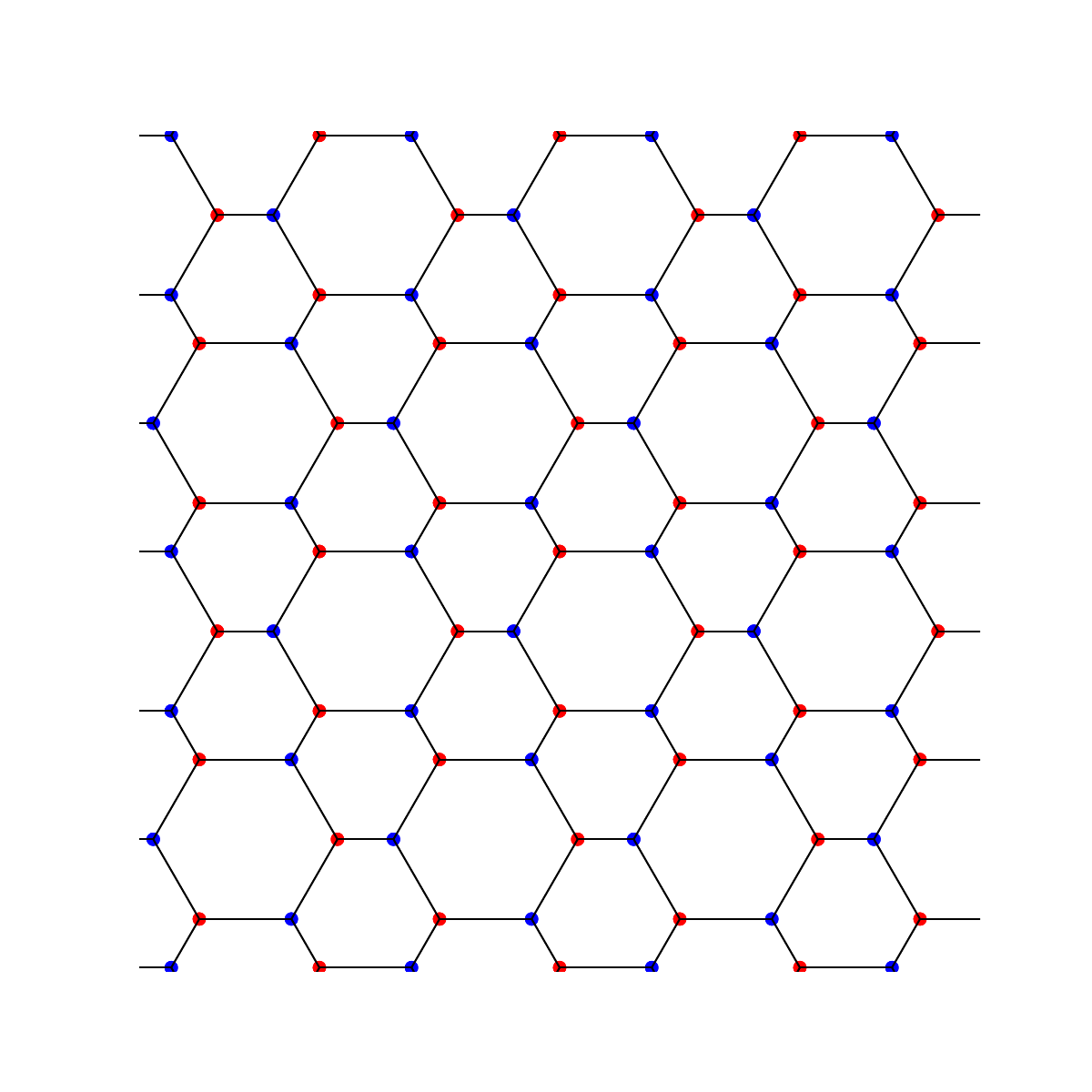}
    \caption{Exaggerated representation of the Kekulé O-type distortion, with a characteristic spread of both regular and benzene-like hexagons.}
    \label{fig:kekfig}
\end{figure}

\subsection{Computation of second-order term in~\eqref{eq:energy_taylor}} \label{appendix:order2term} ~\\

We seek to compute $\sum_{1 \leq i \leq 3, 4 \leq j \leq 6} \frac{1}{\lambda_i(\bk) - \lambda_j(\bk)} \Tr (P_i(\bk) S_1 P_j(\bk) S_1)$, that is 

\begin{equation} \label{eq:order2sum}
\sum_{1 \leq i \leq 3, 4 \leq j \leq 6} \frac{1}{\lambda_i(\bk) - \lambda_j(\bk)} |\ps{x_i(\bk)}{S_1 x_j(\bk)}|^2.
\end{equation}

To do so, we first cancel out a number of terms by symmetry. Recall that $S_1$ is shorthand for $T(1,0,0, \bk)$, and let $\tau$ be translation by the vector  $\mathbf{a}_1-\mathbf{a}_2$, with $\tau(\bk)$ its $B_6$ Brillouin space counterpart, this translation handily conjugates between different eigenvectors, as summarised here: 

\begin{align*}
    \tau(\bk)T(1,0,0, \bk)\tau(\bk)^* 
    & = T(0,1,0, \bk), \\
    \tau(\bk)^*T(1,0,0, \bk)\tau(\bk) 
    & = T(0,0,1, \bk), \\
    \tau(\bk)x_1(\bk) = \e^{\ri\bk\cdot(-\ba_1+\ba_2)}x_1(\bk), \ \ 
    & \tau(\bk)x_4(\bk) = \e^{\ri\bk\cdot(-\ba_1+\ba_2)}x_2(\bk),\\
    \tau(\bk)x_2(\bk) = j^2\e^{\ri\bk\cdot(-\ba_1+\ba_2)}x_2(\bk), \ \ 
    & \tau(\bk)x_5(\bk) = j^2\e^{\ri\bk\cdot(-\ba_1+\ba_2)}x_5(\bk),\\
    \tau(\bk)x_3(\bk) = j\e^{\ri\bk\cdot(-\ba_1+\ba_2)}x_3(\bk), \ \ 
    & \tau(\bk)x_6(\bk) = j\e^{\ri\bk\cdot(-\ba_1+\ba_2)}x_6(\bk).
\end{align*}

Then we can notice that
\[ \ps{x_1(\bk)}{T(1,0,0,\bk)x_4(\bk)} = \ps{x_1(\bk)}{T(0,1,0,\bk)x_4(\bk)} = \ps{x_1(\bk)}{T(0,0,1,\bk)x_4(\bk)} \]
\[ = \frac13\ps{x_1(\bk)}{H(\bk)x_4(\bk)} \]
where we used that $H = T(1,0,0,\bk) + T(0,1,0,\bk) + T(0,0,1,\bk)$. Since $\ps{x_1(\bk)}{H(\bk)x_4(\bk)}$ vanishes, this term is eliminated. The same reasoning eliminates terms $\ps{x_2(\bk)}{S_1x_5(\bk)}$ and $\ps{x_3(\bk)}{S_1x_6(\bk)}$. 

\medskip

Several other terms also match: take $U = \begin{pmatrix} I_3 & 0 \\ 0 & I_3 \end{pmatrix}$, then we have $US_1U = -S_1$, and $Ux_1 = x_4$, yielding
\[ \ps{x_1}{S_1x_5} = \ps{x_1}{S_1Ux_2} = -\ps{Ux_1}{S_1x_2} = -\ps{x_4}{S_1x_2}. \]

The overall sum in~\eqref{eq:order2sum} then reduces to: 
\begin{equation*}
    \frac{2 |\ps{x_1(\bk)}{S_1x_5(\bk)}|^2 }{m(\bk)+m(\bk+\mathbf{b_1^*})} 
     + \frac{2 |\ps{x_1(\bk)}{S_1x_6(\bk)}|^2}{m(\bk)+m(\bk+\mathbf{b_2^*})} 
     + \frac{2|\ps{x_2(\bk)}{S_1x_6(\bk)}|^2 }{m(\bk+\mathbf{b_1^*})+m(\bk+\mathbf{b_2^*})}
\end{equation*}
and these scalar products write 
\[ \ps{x_1(\bk)}{S_1x_5(\bk)} = \frac{1}{6}m(\bk+\mathbf{b_1^*})\left( e^{\ri\theta(\bk)}e^{-\ri\theta(\bk + \mathbf{b_2^*})} - e^{\ri\theta(\bk + \mathbf{b_2^*})}e^{-\ri\theta(\bk + \mathbf{b_1^*})}\right),\]		
and so on, which gives us the expressions for $\Theta_i$ in~\eqref{eq:energy_taylor}.

\subsection{Band structure of the Kagome lattice} \label{appendix:kagome}

The Kagome lattice is the line graph of the honeycomb lattice (i.e. that obtained when placing a site on each honeycomb bond), it is also sometimes referred to as the trihexagonal tiling, see Figure~\ref{fig:kagome_lattice}. Consider a finite supercell in the honeycomb lattice, made of $L \times L$ 2-atom cells with periodic boundary conditions, its line graph is a periodic Kagome lattice with $3L^2$ sites (since we count 3 honeycomb bonds per 2-atom cell). Set $A$ to be its adjacency matrix: typically when considering a tight-binding Hamiltonian with hopping strength $t$ on the Kagome lattice, one takes $H = -tA$. 

Our conventions: the Kagome lattice is a triangular Bravais lattice with a three-element basis. The lattice vectors are $\textbf{a}_1, \textbf{a}_2$, and the basis vectors are $0, 1/2\textbf{a}_1, 1/2\textbf{a}_2$. Apply a Fourier transform to obtain $A$'s band structure: if $a_\textbf{R}, b_\textbf{R}, c_\textbf{R}$ are values at site $\textbf{R}$, then define $a_\textbf{k}, b_\textbf{k}, c_\textbf{k}$ for $\bk$ in a discretised Brillouin zone by

\[ \begin{pmatrix}
    a_\textbf{R} \\ b_\textbf{R} \\ c_\textbf{R}
\end{pmatrix}
= \frac{1}{\sqrt{3L^2}}\sum_{\textbf{k}\in B_{2}^{\text{disc}}} \e^{\ri\textbf{k}\cdot \textbf{R}}
\begin{pmatrix}
    a_\textbf{k} \\ \e^{\ri\frac{\textbf{a}_1}{2}\cdot \textbf{k}} \ b_\textbf{k} \\ \e^{\ri\frac{\textbf{a}_2}{2}\cdot \textbf{k}} \ c_\textbf{k}
\end{pmatrix}.\]
In this Fourier basis, $A$ rewrites as 

\[ A(\textbf{k}) = 2\begin{pmatrix}
    0 & \cos\left(\frac{\textbf{a}_1}{2}\cdot \textbf{k}\right) & \cos\left(\frac{\textbf{a}_2}{2}\cdot \textbf{k}\right) \\
    \cos\left(\frac{\textbf{a}_1}{2}\cdot \textbf{k}\right) & 0 & \cos\left(\frac{\textbf{a}_1 - \textbf{a}_2}{2}\cdot \textbf{k}\right) \\
    \cos\left(\frac{\textbf{a}_2}{2}\cdot \textbf{k}\right) & \cos\left(\frac{\textbf{a}_1 - \textbf{a}_2}{2}\cdot \textbf{k}\right) & 0
\end{pmatrix}.\]

Its eigenvalues can be  computed directly, yielding a flat band $\lambda_0 = -2$, and two further bands 

\[\lambda_\pm(\textbf{k}) = 1 \pm \sqrt{3 + 2\cos(\textbf{a}_1\cdot \textbf{k}) + 2\cos(\textbf{a}_2\cdot \textbf{k}) + 2\cos((\textbf{a}_1-\textbf{a}_2)\cdot \textbf{k})   }, \] ranging over $[-2,1]$ and $[1,4]$ respectively for $\textbf{k}$ in the Brillouin zone. Hence the exact lower bound for $A$ is $-2$, it is attained by Kekulé configurations, as well as less symmetric configurations, see e.g. \cite{bergman2008band}.

\begin{figure}
    \centering
    \includegraphics[scale=0.8]{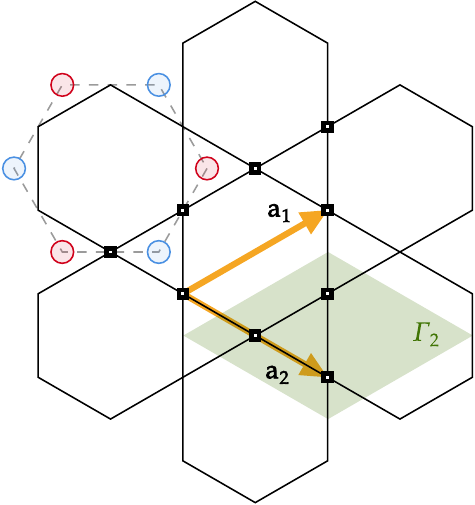}
    \caption{The Kagome lattice (squares and full lines) appears as the line graph of the honeycomb (in dotted lines), with the same lattice vectors $\textbf{a}_1, \textbf{a}_2$ and three sites per $\Gamma_2$ cell. }
    \label{fig:kagome_lattice}
\end{figure}

\subsection{Translation-invariance for large $\mu$, for strongly convex distortion functionals.} \label{appendix:gen_elastic}

We can generalise the proof of the second point in Theorem~\ref{thm:phasetrans} as follows. Take the energy per Carbon atom for some $L$-periodic configuration $\textbf{t}$ to be 

\[
    \mce(\textbf{t}) = -\VTr |T(\textbf{t})| + \frac{\mu}{2}\frac{1}{2L^2}\sum_{i,n} F(t_{i,n})
\]
where $F$ is the elastic energy term. We have proven the result for $F$ is quadratic, and we can generalise simply by assuming $F$ to be strongly convex with a unique minimum at $C > 0$, setting $2\alpha := \min F'' > 0$. Assume also that for negative $t$ that $F(t) \geq F(|t|)$, to ensure $t_* \geq 0$. The pristine energy writes for positive $t$:
\[ \mce(t\textbf{1}) = -t \VTr |H| + \frac{3\mu}{4} F(t) \]
with Euler-Lagrange in $t_*$ writing as

\[ F'(t_*) = \frac{4}{3\mu} \VTr |H|, \]
yielding a unique value for $t_*$ due to the increasing nature of $F'$. Note that $t_* = t_*(\mu)$ is decreasing in $\mu$, and lower bounded by $C$. Expansion of the total distortion term writes as follows, recalling $\textbf{t} = t_* + \bh$ and $\varepsilon = \langle \bh \rangle = \frac{1}{3 L^2} \sum_{i,n} h_{i,n}$:
\begin{align*}
    \frac{\mu}{4L^2}\sum_{i,n} F(t_{i,n}) 
    & = \frac{\mu}{4L^2}\sum_{i,n} F(t_* + h_{i,n}) \\
    & \geq \frac{\mu}{4L^2} \sum_{i,n} F(t_*) + F'(t_*)(h_{i,n}) + \alpha (h_{i,n})^2 \\
    & = \frac{3\mu}{4}F(t_*) + \varepsilon \VTr |H| + \alpha \frac{\mu}{4L^2}\sum_{i,n} (h_{i,n})^2. 
\end{align*}
Now recall that the computation following~\eqref{ineq:en_pre_kag} yielded the quantum kinetic lower bound:
\[ -\underline{\Tr } \ T(\textbf{t}) \geq - t_* \VTr |H| - \varepsilon \VTr |H| + \frac{1}{2t_*} \VTr M \cdot \frac{1}{L^2}\sum_{i,n} (h_{i,n})^2, \]
where $M := -\frac32 \frac{1}{|H|} + \frac16|H|$. Set as previously $\delta^2 = \frac{1}{3L^2} \sum_{i,n} (h_{i,n})^2$, then collecting these two lower bounds yields
\[
\mce(\textbf{t})  \ge \mce(\textbf{t}_*) + \frac{3\delta^2}{2} \left[ \frac{\mu}{2}\alpha - \frac{1}{t_*} \VTr M \right] . 
\]

This again yields the existence of a critical value for $\mu$: since $t_*$ is lower bounded by $C >0$, $\mu t_*$ must diverge with $\mu$, hence there is a value of $\mu$ such that $ \mu\alpha/2~-~\VTr M/t_*$ is positive. Above this critical $\mu$, the pristine configuration will be the only minimizer, with uniqueness ensured by the trace concavity used in the quantum term. 
\\

This argument fails if we take $F$ to be merely convex, for instance by setting $F(t)~=~|t-1|^{3/2}$: if we only apply the trace inequality, we get
\[
\mce(\textbf{t}) \geq \mce(\textbf{t}_*) + \frac{1}{4L^2} \sum_{i,n} \mu\left( |t_* + h_{i,n} -1|^{3/2} - F(t_*)\right) - (h_{i,n})^2\frac{ 1}{t_*} \VTr M .
\]
In the sum, for any fixed value of $\mu$, large $h$ will make the terms negative, thus no threshold $\mu$ can be found in this manner.

%%%%%%%%%%%%%%

\printbibliography

\end{document}